\newcommand{\defeq}{\stackrel{\text{def}}{=}}
\newtheorem{thm}{Theorem}
\newtheorem{lemma}{Lemma}[thm]
\newtheorem{cor}{Corollary}[thm]
\theoremstyle{definition}
\newtheorem{defi}{Definition}
\begin{document}

\begin{frontmatter}
\title{The IMP game: Learnability, approximability and adversarial learning beyond $\Sigma^0_1$}
\author[monash]{Michael Brand}
\ead{michael.brand@monash.edu}
\author[monash]{David L. Dowe}
\ead{david.dowe@monash.edu}
\address[monash]{Faculty of IT (Clayton), Monash University, Clayton, VIC 3800, Australia}

\begin{abstract}
We introduce a problem set-up we call the Iterated Matching Pennies (IMP) game
and show that it is a powerful framework for the study of three problems:
adversarial learnability, conventional (i.e., non-adversarial) learnability and
approximability. Using
it, we are able to derive the following theorems. (1) It is possible
to learn by example all of $\Sigma^0_1 \cup \Pi^0_1$ as well as some supersets;
(2) in adversarial learning (which we describe as a pursuit-evasion game),
the pursuer has a winning strategy (in other words, $\Sigma^0_1$ can be learned
adversarially, but $\Pi^0_1$ not); (3) some languages in $\Pi^0_1$ cannot be
approximated by any language in $\Sigma^0_1$.

We show corresponding results also for $\Sigma^0_i$ and $\Pi^0_i$ for arbitrary
$i$.
\end{abstract}

\begin{keyword}
Turing machine \sep recursively enumerable \sep decidable \sep approximation
\sep matching pennies \sep halting \sep halting problem
\sep elusive model paradox \sep red herring sequence \sep learnability
\sep Nash equilibrium \sep approximability \sep adversarial learning
\end{keyword}

\end{frontmatter}

\section{Introduction}
This paper deals with three widely-discussed topics: approximability,
conventional learnability and adversarial learnability,
and introduces a unified framework in which all three can be
studied.

First, consider approximability.
Turing's seminal 1936 result \citep{Turing:Computable} demonstrated that some
languages that can be accepted by Turing machines (TMs) are not decidable.
Otherwise stated, some R.E.\ languages are not recursive. Equivalently:
some co-R.E.\ languages are not R.E.; any R.E.\ language must
differ from them by at least one word. However, the diagonalisation process
by which this result was originally derived makes no stronger claim regarding
the number of words differentiating a co-R.E.\ language and an R.E.\ one.
It merely shows one example of a word where a difference must exist.

We extend this original result by showing that some co-R.E.\ languages are,
in some sense,
as different from any R.E.\ language as it is possible to be.

To formalise this statement, consider an arbitrary (computable) enumeration,
$w_1, w_2,\ldots$, over the complete language (the language that includes all
words over the chosen alphabet). Over this enumeration, $\{w_i\}$, we define
a distance metric,
\textbf{dissimilarity}, between two languages, $L_1$ and $L_2$, as follows.
\[
\text{DisSim}(L_1,L_2)\equiv\limsup_{n\to\infty}\frac{|(L_1 \triangle L_2)\cap\{w_1,\ldots,w_n\}|}{n},
\]
where $L_1 \triangle L_2$ is the symmetric difference.
We note that the value of $\text{DisSim}(L_1,L_2)$ depends on the enumeration
chosen, and therefore, technically, $\text{DisSim}(\cdot)=\text{DisSim}_{\{w_i\}}(\cdot)$.
However, all results in this paper are true for all possible
choices of the enumeration, for which reason we omit the choice of enumeration,
opting for this more simplified notation.

$\text{DisSim}(L_1,L_2)$ ranges between $0$ (the languages are essentially
identical) and $1$ (the languages are completely dissimilar).

We prove:
\begin{thm}\label{T:approx}
There is a co-R.E.\ language $\bar{L}$ such that every R.E.\ language
has a dissimilarity distance of $1$ from $\bar{L}$.
\end{thm}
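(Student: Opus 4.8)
My plan is to build $\bar L$ by a direct diagonalisation over all R.E.\ languages, exploiting the fact that the $\limsup$ in $\text{DisSim}$ only demands \emph{arbitrarily large} prefixes of near-total disagreement, rather than eventual total disagreement. Fix the enumeration $w_1, w_2, \ldots$ and let $W_1, W_2, \ldots$ be an enumeration of all R.E.\ languages (via an enumeration of Turing machines). The requirement I must meet for every $e$ is $R_e\colon \text{DisSim}(W_e, \bar L) = 1$, i.e.\ that the fraction of the first $n$ words lying in $W_e \triangle \bar L$ tends to $1$ along some sequence $n \to \infty$.

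First I would partition the index set $\{1, 2, 3, \ldots\}$ into consecutive blocks $I_1, I_2, \ldots$ whose right endpoints $N_j = \max I_j$ grow fast enough that $N_{j-1}/N_j \to 0$ (for instance $N_j = j!$), so that each block eventually dominates the entire prefix preceding it: $|I_j|/N_j \to 1$. I would then assign blocks to requirements by any computable schedule that hands each index $e$ infinitely many blocks (e.g.\ a standard dovetailing). On a block $I_j$ allotted to $e$, I define $\bar L$ to coincide with $\overline{W_e}$; that is, $w \in \bar L \iff w \notin W_e$ for every $w$ indexed in $I_j$. Because the blocks partition the indices, this specifies $\bar L$ unambiguously on every word.

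The two things to verify are computability and the distance bound. For computability, note that on each block the \emph{complement} of $\bar L$ agrees with the R.E.\ set $W_e$, so $\overline{\bar L} = \bigcup_j (W_{e(j)} \cap I_j)$, where $e(j)$ is the (computable) requirement assigned to block $j$; this union is R.E.\ by dovetailing the enumerations of the $W_{e(j)}$ against the recursive membership test $w \in I_j$. Hence $\bar L$ is co-R.E. For the distance, observe that on a block $I_j$ allotted to $e$ \emph{every} word disagrees between $W_e$ and $\bar L$, so the count of $(W_e \triangle \bar L)$-words in the prefix $\{w_1, \ldots, w_{N_j}\}$ is at least $|I_j|$, giving a fraction at least $|I_j|/N_j \to 1$. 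Since $e$ receives infinitely many blocks, these endpoints $N_j$ form an unbounded sequence along which the fraction tends to $1$, so $\text{DisSim}(W_e, \bar L) = 1$ for every $e$.

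The crux --- and the step I expect to carry the real content --- is the simultaneous reconciliation of total disagreement with the co-R.E.\ constraint. The asymmetry is essential: demanding $\bar L(w) = 1 - W_e(w)$ makes $\overline{\bar L}$ locally equal to the R.E.\ set $W_e$, which is exactly what keeps $\bar L$ co-R.E.; the same construction could not make $\bar L$ itself R.E., since $\overline{W_e}$ is in general only co-R.E. I would take care to confirm that this asymmetry, together with the rapid growth of the $N_j$ and the infinitely-often scheduling, suffices for \emph{all} R.E.\ languages at once, noting finally that the argument is insensitive to the chosen enumeration since it manipulates only index blocks and R.E.\ membership.
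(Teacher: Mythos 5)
Your proposal is correct and is essentially the paper's own construction: the paper proves the theorem by taking the complement of the R.E.\ language built in Lemma~\ref{L:na_inf}, which likewise partitions the indices into factorial-length blocks ($\textit{caf}$), assigns each R.E.\ language infinitely many blocks via a triangular walk, and copies the assigned language on each block so that the disagreement fraction of the complement tends to $1$ along the block endpoints. Your direct phrasing (defining $\bar L$ to disagree with $W_e$ on its blocks, and observing that $\overline{\bar L}$ is a dovetailed union of R.E.\ pieces) is the same argument viewed from the other side.
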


Consider now learnability.
Learnability is an important concept in statistics, econometrics, machine
learning, inductive inference, data mining and other fields.
This has been discussed by E.\ M.\ Gold and by
L.\ G.\ Valiant in terms of language identification in the limit
\cite{Gold1967,Valiant1984}, and also in statistics via the notion of
statistical consistency, also known as ``completeness'' (converging arbitrarily
closely in the limit to an underlying true model).

Following upon his convergence results in \cite{Solomonoff1978},
Solomonoff writes \cite[sec. 2 (Completeness and Incomputability)]{Solomonoff2011}:
\begin{quote}
``It is notable that
completeness and incomputability are complementary properties: It is easy to
prove that any complete prediction method must be incomputable. Moreover,
any computable prediction method can not be complete -- there will always be a
large space of regularities for which its predictions are catastrophically poor.''
\end{quote}

In other words, in Solomonoff's problem set-up it is impossible for a Turing
machine to learn every R.E.\ language:
every computable learner is limited.

Nevertheless, in the somewhat different context within which we study
learnability, we are able to show that this tension does not exist:
a Turing machine can learn any computable language. Moreover, we will consider
a set of languages that includes, as a proper subset of it, the languages
$\Sigma^0_1 \cup \Pi^0_1$ and
will prove that while no deterministic learning algorithm can learn every
language in the set, a probabilistic one can (with probability $1$),
and a mixed strategy involving
several deterministic learning algorithms can approximate this arbitrarily
well.\footnote{Here and elsewhere we use the standard notations for language
families in the arithmetical hierarchy \cite{Rogers:recursive}: $\Sigma^0_1$ is
the set of recursively enumerable languages, $\Pi^0_1$ is the set of
co-R.E.\ languages.}

Lastly, consider adversarial learning \cite{Lowd:adversarial, Wei:adversarial,
Huang:adversarial}. This is different from the conventional
learning scenario described above in that while in conventional learning we
attempt to converge to an underlying ``true model'' based on given observations,
adversarial learning is a multi-player process in which each participant can
observe (to some extent) other players' predictions and adjust their own
actions accordingly. This game-theoretic set-up becomes of practical importance
in many scenarios. For example, in online bidding bidders use information
available to them (e.g., whether they won a particular auction) to learn the
strategy used by competing bidders, so as to be able to optimise their own
strategy accordingly.

We consider, specifically, an adversarial learning scenario in which one player
(the pursuer) attempts to copy a second player, while the second player
(the evader) is attempting to avoid being copied. Specifically, each player
generates a bit ($0$ or $1$) and the pursuer wins if the two bits are equal
while the
evader wins if they are not. Though on the face of it this scenario may seem
symmetric, we show that the pursuer has a winning strategy.

To attain all these results (as well as their higher-Turing-degree equivalents),
we introduce a unified framework in which these
questions and related ones can all be studied.
The set-up used is an adaptation of one initially introduced by Scriven
\cite{Scriven1965} of a predictor and a contrapredictive (or avoider)
effectively playing
what we might nowadays describe as a game of iterated matching pennies.
In Section~\ref{S:matching}, we give a formal description of this problem
set-up and briefly describe its historical evolution.
In Section~\ref{S:halting}, we explain the relevance of the set-up to
the learnability and approximability problems and analyse, as an
example case, adversarial learning in the class of
decidable languages. In Section~\ref{S:nonhalting}, we extend the
analysis to adversarial learning in all other classes in the arithmetical
hierarchy, and in particular to Turing machines.

In Sections~\ref{S:learnability} and \ref{S:approximability} we then return to
conventional learnability and to approximability, respectively, and prove the
remaining results by use of the set-up developed, showing how it can be
adapted to these problems.

\section{Matching Pennies}\label{S:matching}

The matching pennies game is a zero-sum two-player game where each player is
required to output a bit. If the two bits are equal, this is a win for
Player~``$=$''; if they differ, this is a win for
Player~``$\ne$''.
The game is a classic example used in teaching mixed strategies
\citep[see, e.g.][pp.\ 283--284]{flake1998computational}: its
only Nash equilibrium \citep{vonNeumann:1944:TGE,Nash1951:Non-cooperative_Games}
is a mixed strategy wherein each player
chooses each of the two options with probability $1/2$.

Consider, now, an iterative version of this game, where at each round the
players choose a new bit with perfect information of all previous rounds.
Here, too, the best strategy is to choose at each round a new bit with
probability $1/2$ for each option, and with the added caveat that each bit
must be independent of all previous bits.
In the iterative variation, we define the payoff (of the entire game) to be
\begin{equation}\label{Eq:ne}
S=S_{\ne}=\left(\liminf_{N\to\infty} \sum_{n=1}^{N} \frac{\delta_n}{2N}\right)+\left(\limsup_{N\to\infty} \sum_{n=1}^{N} \frac{\delta_n}{2N}\right)
\end{equation}
for Player~``$\ne$'', where $\delta_n$ is $0$ if the bits output in the $n$'th
round
are equal and $1$ if they are different. The payoff for Player~``$=$'' is
\begin{equation}\label{Eq:eq}
S_{=}=1-S_{\ne}=\left(\liminf_{N\to\infty} \sum_{n=1}^{N} \frac{1-\delta_n}{2N}\right)+\left(\limsup_{N\to\infty} \sum_{n=1}^{N} \frac{1-\delta_n}{2N}\right)
\end{equation}

These payoff functions were designed to satisfy
the following criteria:
\begin{itemize}
\item They are always defined.
\item The game is zero-sum and strategically symmetric, except for the
essential distinction between a player aiming to copy
(Player~``$=$'', the pursuer)
and a player aiming for dissimilarity (Player~``$\ne$'', the evader).
\item The payoff is a function solely of the $\{\delta_i\}$ sequence. (This is
important because in the actual IMP game being constructed players will only
have visibility into past $\delta_i$, not full information regarding the game's
evolution.)
\item Where a limit exists (in the $\lim$ sense) to the percentage of rounds to
be won by a player, the payoff is this percentage.
\end{itemize}
In particular, note that when the payoff functions take the value $0$ or $1$,
there exists a limit (in the $\lim$ sense) to the percentage of rounds to be
won by a player, and in this case the payoff is this limit.

In the case of the strategy pair described above, for example, where
bits are determined by independent, uniform-distribution coin tosses, the limit
exists and the payoff is $1/2$ for both players, indicating that the game is
not biased towards either. This is a Nash equilibrium of the game: neither
player can ensure a higher payoff for herself as long as the other persists
in the equilibrium strategy. The game has other Nash equilibria, but all share
the $(1/2,1/2)$ payoffs.

Above, we describe the players in the game as agents capable of randomisation:
they choose a random bit at each new round. However, the game can
be played, with the same strategies, also by deterministic agents. For this,
consider every possible infinite bit-string as a possible strategy for
each of the players. In this case, the game's Nash equilibrium would be a
strategy pair where each player allots a bit-string from a
uniform distribution among all options.

We formalise this deterministic outlook on the matching pennies game as follows.

\begin{defi}[Iterative Matching Pennies game]\label{D:IMP}
An \emph{Iterative Matching Pennies game} (or \emph{IMP}), denoted
$\text{IMP}(\Sigma_{=},\Sigma_{\ne})$,
is a two player game where
each player chooses a language: Player~``$=$'' chooses $L_{=}\in\Sigma_{=}$ and
Player~``$\ne$'' chooses $L_{\ne}\in\Sigma_{\ne}$, where $\Sigma_{=}$ and
$\Sigma_{\ne}$ are two collections of languages over the binary alphabet.

Where $\Sigma_{=}=\Sigma_{\ne} \;(=\Sigma)$, we denote the game
$\text{IMP}(\Sigma)$.

Define $\Delta_0$ to be the empty string and define, for every natural $i$,
\[
\delta_i \defeq \begin{cases}
1 & \text{if } \Delta_{i-1} \in L_{=}\triangle L_{\ne} \\
0 & \text{if } \Delta_{i-1} \not\in L_{=}\triangle L_{\ne}
\end{cases},
\]
\[
\Delta_i \defeq \Delta_{i-1}\delta_i,
\]
Then the payoffs $S_{=}=S_{=}(L_{=},L_{\ne})$ and
$S_{\ne}=S_{\ne}(L_{=},L_{\ne})$ are as
defined in \eqref{Eq:eq} and \eqref{Eq:ne}, respectively.
The notation ``$\Delta_{i-1}\delta_i$'' indicates string concatenation.

Player (mixed) strategies in this game are described as distributions,
$D_{=}$ and $D_{\ne}$, over $\Sigma_{=}$ and $\Sigma_{\ne}$, respectively.
In this case, we define
\[
S_{=}(D_{=},D_{\ne}) = E(S_{=}(L_{=},L_{\ne}))\quad L_{=} \sim D_{=}, L_{\ne} \sim D_{\ne}.
\]
\[
S_{\ne}(D_{=},D_{\ne}) = E(S_{\ne}(L_{=},L_{\ne}))\quad L_{=} \sim D_{=}, L_{\ne} \sim D_{\ne}.
\]
\end{defi}

Note again that the game is zero sum:
any pair of strategies, pure or mixed, satisfies
\begin{equation}\label{Eq:scoresum}
S_{=}(D_{=},D_{\ne})+S_{\ne}(D_{=},D_{\ne})= 1.
\end{equation}

To better illustrate the dynamics embodied by Definition~\ref{D:IMP}, let us
add two more definitions: let
\begin{equation}\label{Eq:O_eq}
O_=(i) \defeq \begin{cases}
1 & \text{if } \Delta_{i-1} \in L_{=} \\
0 & \text{if } \Delta_{i-1} \not\in L_{=}
\end{cases}
\end{equation}
and let
\begin{equation}\label{Eq:O_ne}
O_{\ne}(i) \defeq \begin{cases}
1 & \text{if } \Delta_{i-1} \in L_{\ne} \\
0 & \text{if } \Delta_{i-1} \not\in L_{\ne}
\end{cases},
\end{equation}
noting that by Definition~\ref{D:IMP},
$\delta_i=O_=(i) \oplus O_{\ne}(i)$, where ``$\oplus$'' denotes the exclusive or
(``xor'') function.

The scenario encapsulated by the IMP game is that of a competition between two
players, Player~``$=$'' and Player~``$\ne$'', where the strategy of the players
is encoded in the form of the languages $L_{=}$ and $L_{\ne}$, respectively
(or distributions over these in the case of mixed strategies).

After $i$ rounds, each player has visibility to the set of results so far.
This is encoded by means of $\Delta_i$, a word composed of the characters
$\delta_1,\ldots, \delta_i$, where each $\delta_k$ is $0$ if the bits that
were output by the two players in round $k$ are equal and $1$ if they are not.
It is based on this history that the players now generate a new bit:
Player~``$=$'' generates $O_=(i+1)$ and Player~``$\ne$'' generates
$O_{\ne}(i+1)$.
The players' strategies are therefore functions from a word ($\Delta_i$) to
a bit ($O_{=}(i+1)$ for Player~``$=$'', $O_{\ne}(i+1)$ for Player~``$\ne$'').
To encode these strategies in the most general form, we use
languages: $L_=$ and $L_{\ne}$ are simply sets containing all the words to
which the response is ``$1$''.
Our choice of how weak or how strong a player can be is then ultimately in
the question of what language family, $\Sigma$, its strategy is chosen from.

Once $O_=(i+1)$ and $O_{\ne}(i+1)$ are determined, $\delta_{i+1}$ is simply
their xor ($1$ if the bits differ, $0$ if they are the same), and in this way
the definition generates the infinite list of $\delta_i$ that is ultimately
used to compute the game's overall payoff for each player.

Were we to actually try and run a real-world IMP
competition by directly implementing the definitions above, and were we to try
to implement the Nash equilibrium player strategies, we would immediately run
into two elements in the set-up that are incomputable: first, the choice of
a uniform infinitely-long bit-string, our chosen distribution among the
potential strategies, is incomputable (it is a choice among
uncountably many
elements); second, for a deterministic player (an agent) to output all the bits
of an arbitrary (i.e., general) bit-string, that player cannot be a Turing
machine. There are only countably many Turing machines, so only countably many
bit-strings that can thus be output.

In this paper, we examine the IMP game with several choices for
$\Sigma_{=}$ and $\Sigma_{\ne}$. The main case studied is where
$\Sigma_{=}=\Sigma_{\ne}=\Sigma^0_1$. In this
case, we still allow player mixed strategies to be incomputable distributions,
but any $L_{=}$ and $L_{\ne}$ are computable by TMs.

The set-up described here, where Iterated Matching Pennies is essentially
described as
a pursuit-evasion game, was initially introduced informally by Scriven
\cite{Scriven1965} in order to prove that unpredictability is innate to humans.
Lewis and Richardson \cite{LewisRichardson1966}, without explicitly mentioning
Turing machines or any (equivalent) models of computation, reinvestigated the
model and used it to refute Scriven's claim, with a proof that hinges on the
halting problem, but references it only implicitly.

The set-up was redeveloped independently by Dowe, first in the context of the
avoider trying to choose the next number in an integer sequence to be larger
(by one) than the (otherwise) best inference that one might expect
\cite[sec. 0.2.7, p.\ 545, col.\ 2 and footnote 211]{Dowe2008a},
and then, as in \cite{Scriven1965}, in the context of predicting bits in a
sequence \cite[p.\ 455]{Dowe2008b}\cite[pp.\ 16--17]{Dowe2013a}.
Dowe was the first to introduce the terminology of TMs into the set-up. His
aim was to illicit a paradox, which he dubbed ``the elusive model paradox'',
whose resolution relies on the undecidability of the halting problem. Thus, it
would provide an alternative to the method of \cite{Turing:Computable} to prove
this undecidability.
Variants of the elusive model paradox and of the ``red herring sequence''
(the optimal sequence to be used by an avoider)
are discussed in \cite[sec.\ 7.5]{Dowe2011a}, with the paradox also mentioned in
\cite[sec.\ 2.2]{DoweHernandez-OralloDas2011}\cite[footnote 9]{Hernandez-OralloDoweEspana-CubilloHernandez-LloredaInsa-Cabrera2011}.

Yet a third independent incarnation of the model was by Solomonoff, who
discussed variants of the elusive model paradox and the red herring sequence
in \cite[Appendix B]{Solomonoff2010} and \cite[sec.\ 3]{Solomonoff2009}.

We note that the more formal investigations of Dowe and of Solomonoff were in
contexts in which the ``game'' character of the set-up was not explored.
Rather, the set-up was effectively a one-player game, where regardless of the
player's choice of next bit, the red herring sequence's next bit was its
reverse. We, on the other hand, return to the original spirit of Scriven's
formulation, investigating the dynamics of the two player game, but do so
in a formal setting.

Specifically, we investigate the question of which of the two players
(if either) has an advantage in this
game, and, in particular, we will be interested in the game's Nash equilibria,
which are the pairs of strategies $(D_{=}^*, D_{\ne}^*)$ for which
\[
S_{=}(D_{=}^*,D_{\ne}^*) = \sup_{D_{=}} S_{=}(D_{=},D_{\ne}^*)
\]
and
\[
S_{\ne}(D_{=}^*,D_{\ne}^*) = \sup_{D_{\ne}} S_{\ne}(D_{=}^*,D_{\ne}).
\]
We define
\[
\text{minmax}(\Sigma_{=},\Sigma_{\ne}) = \inf_{D_{=}} \sup_{D_{\ne}} S(D_{=},D_{\ne})
\]
and
\[
\text{maxmin}(\Sigma_{=},\Sigma_{\ne}) = \sup_{D_{\ne}} \inf_{D_{=}} S(D_{=},D_{\ne}),
\]
where $D_{=}$ is a (potentially incomputable) distribution over $\Sigma_{=}$
and $D_{\ne}$ is a (potentially incomputable) distribution over $\Sigma_{\ne}$.
Where $\Sigma_{=}=\Sigma_{\ne} \;(=\Sigma)$, we will abbreviate this to
$\text{minmax}(\Sigma)$ and $\text{maxmin}(\Sigma)$.

A Nash equilibrium $(D_{=}^*, D_{\ne}^*)$ must satisfy
\begin{equation}\label{Eq:minmax}
S(D_{=}^*,D_{\ne}^*) = \text{maxmin}(\Sigma_{=},\Sigma_{\ne}) = \text{minmax}(\Sigma_{=},\Sigma_{\ne}),
\end{equation}
where, as before, $S=S_{\ne}$.

We note that while it may seem, at first glance, that the introduction of game
dynamics
into the problems of learnability and approximability inserts an unnecessary
complication into their analysis, in fact, we will show that the ability to
learn and/or approximate languages, when worded formally, involves a large
number of interlocking ``$\lim$'', ``$\sup$'', ``$\inf$'', ``$\limsup$'' and
``$\liminf$'' clauses that are most naturally expressed in terms of minmax
and maxmin solutions, Nash equilibria and mixed strategies.

\section{Halting Turing machines}\label{S:halting}

The IMP game serves as a natural platform for investigating
adversarial learning:
each of the players has the opportunity to learn from
all previous rounds, extrapolate from this to the question of what algorithm
their adversary is employing and then choose their own course of action to
best counteract the adversary's methods.

Furthermore, where $\Sigma_{=}=\Sigma_{\ne} \;(=\Sigma)$,
IMP serves as a natural
arena to differentiate between the learning
of a language (e.g., one selected from R.E.) and its complement (e.g., a
language selected from co-R.E.), because Player~``$=$'', the copying player, is
essentially trying to learn a language from $\Sigma$, namely that chosen by
Player~``$\ne$'', whereas Player~``$\ne$'' is attempting to learn a language from
co-$\Sigma$, namely the complement to that chosen by Player~``$=$''.
Any advantage
to
Player~``$=$'' can be attributed solely to the difficulty to learn
co-$\Sigma$ by an algorithm from $\Sigma$, as opposed to the ability to learn
$\Sigma$.

To exemplify IMP analysis, consider first the game where $\Sigma=\Delta^0_1$,
the set of decidable languages.
Because decidable languages are a set known
to be closed under complement, we expect Player~``$\ne$'' to be equally as
successful as
Player~``$=$'' in this variation. Consider, therefore, what would be the Nash
equilibria in this case.

\begin{thm}\label{T:halting}
Let $\Sigma$ be the set of decidable languages over $\{0,1\}^{*}$.
The game $\text{IMP}(\Sigma)$ does not have any Nash equilibria.
\end{thm}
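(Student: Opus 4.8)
The plan is to argue by contradiction: assume a Nash equilibrium $(D_=^*,D_{\ne}^*)$ exists, and derive that its value $v = S_{\ne}(D_=^*,D_{\ne}^*)$ must satisfy both $v<\tfrac12$ and $v>\tfrac12$. First I would record the two degenerate observations that force both players to randomise. Since the whole dynamics depend on $L_=$ and $L_{\ne}$ only through their symmetric difference $M = L_=\triangle L_{\ne}$ — indeed $\delta_i$ is the indicator of $\Delta_{i-1}\in M$ and $\Delta_i=\Delta_{i-1}\delta_i$ — I would write $S_{\ne}(L_=,L_{\ne})=f(M)$ for the functional $f$ that runs this deterministic path and returns the value of \eqref{Eq:ne}. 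If Player~``$\ne$'' commits to a pure $L_{\ne}$, Player~``$=$'' answers $L_==L_{\ne}$ (so $M=\emptyset$, $f=0$); if Player~``$=$'' commits to a pure $L_=$, Player~``$\ne$'' answers $L_{\ne}=\overline{L_=}$, which is decidable since $\Delta^0_1$ is closed under complement ($M=\{0,1\}^{*}$, $f=1$). Hence no equilibrium can involve a pure strategy, and both supports must be genuinely spread over $\Delta^0_1$.

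Next I would convert the best-response conditions into two exploitation claims. Because $S_{\ne}(D_=,D_{\ne})=\E_{A\sim D_=,\,B\sim D_{\ne}} f(A\triangle B)$ is separately linear in each player's distribution, the optimal reply to a fixed mixed strategy is attained at a pure strategy: $\sup_{D_{\ne}} S_{\ne}(D_=,D_{\ne})=\sup_{B\in\Delta^0_1}\E_{A\sim D_=} f(A\triangle B)$ and, dually, $\inf_{D_=} S_{\ne}(D_=,D_{\ne})=\inf_{A\in\Delta^0_1}\E_{B\sim D_{\ne}} f(A\triangle B)$. I would then prove: (E)~for every distribution $D_=$ over decidable languages there is a decidable $B$ with $\E_{A\sim D_=} f(A\triangle B)>\tfrac12$; and (P)~for every $D_{\ne}$ there is a decidable $A$ with $\E_{B\sim D_{\ne}} f(A\triangle B)<\tfrac12$. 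Granting these, the two best-response identities defining a Nash equilibrium, together with $S_==1-S_{\ne}$, give $v=\sup_{D_{\ne}} S_{\ne}(D_=^*,D_{\ne})>\tfrac12$ from (E) and $v=\inf_{D_=} S_{\ne}(D_=,D_{\ne}^*)<\tfrac12$ from (P), the desired contradiction.

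The heart of the proof, and the step I expect to be the main obstacle, is establishing claim~(E) (claim~(P) being its dual, obtained by exchanging the roles of copier and avoider). The tempting shortcut — averaging the responses $B$ and $\overline{B}$ in the hope that $f(M)+f(\overline M)\ge 1$ — does not work: because complementation changes the very path that $M$ generates, one can exhibit a decidable $M$, for instance $M=\{1\,0^{k}:k\ge 0\}$, with $f(M)=f(\overline M)=0$, so the two responses can be simultaneously worthless. The edge must therefore come from a genuinely adaptive responder. Here I would use that every language in the (countable) support of $D_=$ is decidable, so its bit $O_=(i)$ along the realised path is computable from the observable history; Player~``$\ne$'' can then interleave simulations of the support languages and, in the limit, oppose the realised $L_=$ on a set of rounds whose long-run frequency exceeds $\tfrac12$, the $\liminf/\limsup$ form of \eqref{Eq:ne} being exactly what lets such a positive-density predictive advantage register as a payoff strictly above $\tfrac12$. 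Making this diagonal/priority construction succeed uniformly over all mixed pursuer strategies — rather than against a single pure opponent — is the delicate part, and it is precisely the decidability of the players that both makes perfect randomisation unattainable and supplies the exploitable structure: the halting-type phenomenon that the later $\Sigma^0_1$ analysis will generalise.
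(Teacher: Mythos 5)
Your high-level strategy is sound and matches the paper's in outline: show that each player, facing any fixed mixed strategy of the other, has a decidable best response that pushes the value strictly to one side, and conclude that $\text{minmax}\ne\text{maxmin}$, so no equilibrium can exist. Your preliminary observations are also correct and worthwhile — the payoff depends only on $L_=\triangle L_{\ne}$, pure strategies are fully exploitable because $\Delta^0_1$ is closed under complement, and the naive averaging of $B$ and $\overline{B}$ fails (your example $M=\{10^k\}$ with $f(M)=f(\overline M)=0$ is a genuine and instructive counterexample to that shortcut).

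However, the entire content of the theorem lives in your claims (E) and (P), and you have not proved them; the sketch you give does not work as stated. A single decidable responder $B$ cannot ``interleave simulations of the support languages'' of an arbitrary mixed strategy $D_=$: the support may be infinite, and there is no computable enumeration of the total Turing machines (equivalently, of $\Delta^0_1$), so no decidable language can diagonalize against the whole support at once. The missing idea — which is exactly how the paper closes this gap — is a finitization: for any $\epsilon>0$ there is a finite $X$ such that the first $X$ languages $L_0,\dots,L_X$ in some (incomputable) enumeration of $\Delta^0_1$ carry probability at least $1-\epsilon$ under the opponent's mixture; these finitely many total machines can be \emph{hard-coded} into the responder, and the number of prediction errors so far, $d=\|\Delta\|_1$, serves as a computable index telling the responder which hypothesis $L_d$ to mimic (or oppose) next. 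Against any $L_x$ with $x\le X$ this loses at most $x$ rounds and then wins forever, giving payoff $1$ against that pure strategy and hence at least $1-\epsilon$ in expectation — far stronger than the $>\tfrac12$ you aim for, and obtained without any priority construction or density argument. Until you supply this (or an equivalent) mechanism, claim (E) is an assertion rather than a proof, and the argument is incomplete at precisely the step you yourself flag as the main obstacle.
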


We remark here that most familiar and typically-studied games belong to a
family of games where the space of mixed strategies is compact and convex,
such as those having a finite number of pure strategies, and
such games necessarily have at least one
Nash equilibrium. However, the same is not true for arbitrary games. (For
example, the game of ``guess the highest number'' does not have a Nash
equilibrium.) IMP, specifically, does not belong to a game family that
guarantees the existence of Nash equilibria.

\begin{proof}
We begin by showing that for any (mixed) strategy $D_{\ne}$,
\begin{equation}\label{Eq:sup}
\sup_{D_{=}} S_{=}(D_{=},D_{\ne})=1.
\end{equation}

Let $\mathcal{T}_0,\mathcal{T}_1, \ldots$ be any (necessarily incomputable)
enumeration over those Turing machines that halt on every input, and
let $L_0, L_1, \ldots$ be the sequence of languages that is accepted by them.
The sequence $\{L_i\}$ enumerates (with repetitions) over all languages in
$\Sigma=\Delta^0_1$.
Under this enumeration we have
\[
\lim_{X\to\infty}\text{Prob}(\exists x\le X\text{, such that }L_{\ne}=L_x)=1;\quad L_{\ne}\sim D_{\ne}.
\]
For this reason, for any $\epsilon$ there exists an $X$
such that
\[
\text{Prob}(\exists x \le X\text{, such that }L_{\ne}=L_x)\ge 1-\epsilon;\quad L_{\ne}\sim D_{\ne}.
\]

We devise a strategy, $D_{=}$, to be used by Player~{=}. This strategy will be
pure: the player will always choose language $L_{=}$, which we will now
describe. The language $L_{=}$ is the one accepted by Algorithm~\ref{A:mixed}.

\begin{algorithm}
\caption{Algorithm for learning a mixed strategy}
\label{A:mixed}
\begin{algorithmic}[1]
\Function {calculate bit}{$\Delta$}
\State $d \gets \|\Delta\|_1$.
\Comment Number of prediction errors so far.
\If {$d>X$}
\State Accept.
\ElsIf {$\Delta \in L_d$}\label{Step:if_in}
\State Accept.
\Else
\State Reject.
\EndIf
\EndFunction
\end{algorithmic}
\end{algorithm}

Note that while the enumeration $\mathcal{T}_0,\mathcal{T}_1,\ldots$ is not
computable, Algorithm~\ref{A:mixed} only requires
$\mathcal{T}_0,\ldots,\mathcal{T}_X$ to be accessible to it,
and this can be done because any such finite set of TMs can be hard coded
into Algorithm~\ref{A:mixed}.

Consider the game,
on the assumption that Player~``$\ne$'''s strategy is $L_x$ for
$x\le X$.
After at most $x$ prediction errors, Algorithm~\ref{A:mixed} will begin
mimicking a strategy
equivalent to $L_x$ and will win every round from that point on.

We see, therefore, that for any $x\in\{0,\ldots,X\}$ we have
$S_{=}(L_{=},L_x)=1$, from which we conclude
that $S_{=}(D_{=},D_{\ne})\ge 1-\epsilon$ (or, equivalently,
$S_{\ne}(D_{=},D_{\ne})\le \epsilon$), in turn proving
that for any Nash equilibrium $(D_{=}^*,D_{\ne}^*)$ we necessarily must have
\begin{equation}\label{Eq:maxmin0}
\text{maxmin}(\Sigma)=0.
\end{equation}

For exactly the symmetric reasons, when $\Sigma=\Delta^0_1$ we also have
\begin{equation}\label{Eq:minmax1}
\text{minmax}(\Sigma)=1:
\end{equation}
Player~``$\ne$'' can follow a strategy identical to that described in
Algorithm~\ref{A:mixed}, except reversing the condition in
Step~\ref{Step:if_in}.

Because we now have that $\text{minmax}(\Sigma)\ne\text{maxmin}(\Sigma)$,
we know that Equation~\eqref{Eq:minmax} cannot be satisfied for any
strategy pair. In particular, there are no Nash equilibria.
\end{proof}

This result is not restricted to $\Sigma=\Delta^0_1$, the decidable languages,
but also to any set of languages that is powerful enough to encode
Algorithm~\ref{A:mixed} and its complement. It is true, for example, for
$\Delta^0_0$ as well as for $\Delta^0_1$ with any set of Oracles, i.e.,
specifically, for any $\Delta^0_i$.

\begin{defi}
We say that a collection of languages $\Sigma_{\ne}$ is
\textbf{adversarially learnable}
by a collection of strategies $\Sigma_{=}$ if
$\text{minmax}(\Sigma_{=},\Sigma_{\ne})=0$.

If a collection is adversarially learnable by $\Sigma^0_1$, we simply say that
it is \textbf{adversarially learnable}.
\end{defi}

\begin{cor}
$\forall i, \Delta^0_i$ is not adversarially learnable by $\Delta^0_i$.
\end{cor}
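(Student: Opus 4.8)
The plan is to unfold the definition of adversarial learnability and reduce the claim to a single statement about the game value. By definition, $\Delta^0_i$ is adversarially learnable by $\Delta^0_i$ precisely when $\text{minmax}(\Delta^0_i,\Delta^0_i)=\text{minmax}(\Delta^0_i)=0$, so it suffices to prove $\text{minmax}(\Delta^0_i)\neq 0$. In fact I would establish the sharp value $\text{minmax}(\Delta^0_i)=1$, which is the exact analogue of Equation~\eqref{Eq:minmax1} for general $i$.

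The heart of the argument is the symmetric (complement) half of the proof of Theorem~\ref{T:halting}, relativized to the $(i-1)$-st Turing jump. For $i\geq 1$ I would fix an (incomputable) enumeration $\mathcal{T}_0,\mathcal{T}_1,\ldots$ of the oracle machines with oracle $\emptyset^{(i-1)}$ that halt on every input; their accepted languages range, with repetitions, over all of $\Delta^0_i$, and $\emptyset^{(0)}=\emptyset$ recovers the decidable base case $\Delta^0_1$ (the case $\Delta^0_0$ coincides with this). I would then run the evader strategy obtained from Algorithm~\ref{A:mixed} by reversing the acceptance condition in Step~\ref{Step:if_in}. Since this strategy consults only the finitely many machines $\mathcal{T}_0,\ldots,\mathcal{T}_X$, they can be hard-coded, so the strategy is decided by a single oracle-$\emptyset^{(i-1)}$ machine and its language lies in $\Delta^0_i$ (using that $\Delta^0_i$ is closed under complement). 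Exactly as in the base case, this shows that for every distribution $D_{=}$ over $\Delta^0_i$ the evader can force $\sup_{D_{\ne}} S_{\ne}(D_{=},D_{\ne})=1$ (recall $S=S_{\ne}$), whence $\text{minmax}(\Delta^0_i)=1$ upon taking the infimum over $D_{=}$.

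The conclusion is then immediate: $\text{minmax}(\Delta^0_i)=1\neq 0$, so the defining equation for adversarial learnability fails, and $\Delta^0_i$ is not adversarially learnable by $\Delta^0_i$ for every $i$. I do not expect a substantive obstacle here; the only point requiring care is the relativization step, and this is routine precisely because the arithmetical hierarchy is designed so that $\Delta^0_i$ stands to $\emptyset^{(i-1)}$ exactly as the decidable languages $\Delta^0_1$ stand to a computable oracle. Indeed, this extension is already asserted in the remark following Theorem~\ref{T:halting}, so the corollary amounts to restating that remark in the language of adversarial learnability.
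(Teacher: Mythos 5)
Your proposal is correct and follows essentially the same route as the paper: the paper's proof is a one-line appeal to the fact, noted in the proof of Theorem~\ref{T:halting} and the remark following it, that $\text{minmax}(\Delta^0_i,\Delta^0_i)=1$, which is exactly what you establish by relativizing the reversed Algorithm~\ref{A:mixed} to the oracle $\emptyset^{(i-1)}$ and using closure of $\Delta^0_i$ under complement. You merely spell out the relativization details that the paper leaves implicit.
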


\begin{proof}
As was shown in the proof of Theorem~\ref{T:halting},
$\text{minmax}(\Delta^0_i,\Delta^0_i)=1$.
\end{proof}

We proceed, therefore, to the question of how well each player fares when
$\Sigma$ includes non-decidable R.E.\ languages, and is therefore no longer
closed under complement.

\section{Adversarial learning}\label{S:nonhalting}

We claim that R.E.\ languages are adversarially learnable, and that it is
therefore not possible to learn the complement of R.E.\ languages
in general, in the adversarial learning scenario.

\begin{thm}\label{T:nonhalting}
The game $\text{IMP}(\Sigma^0_1)$ has a strategy, $L_{=}$, for Player~``$=$''
that guarantees $S_{\ne}(L_{=},L_{\ne})=0$ for all $L_{\ne}$
(and, consequently, also for all distributions
among potential $L_{\ne}$ candidates).

In particular, $\Sigma^0_1$ is adversarially learnable.
\end{thm}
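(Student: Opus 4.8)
The plan is to exhibit a single \emph{pure} strategy $L_{=}$ for Player~``$=$'' and to show that, against any fixed R.E.\ opponent language $L_{\ne}$, the sequence $\{\delta_n\}$ contains only finitely many $1$'s. Since each $\delta_n\ge 0$, this immediately forces $\liminf_N \frac1N\sum_{n\le N}\delta_n=\limsup_N \frac1N\sum_{n\le N}\delta_n=0$ and hence $S_{\ne}(L_{=},L_{\ne})=0$. The extension from a pure $L_{\ne}$ to an arbitrary distribution over $\Sigma^0_1$ then follows at once, since $S_{\ne}(L_{=},D_{\ne})=E(S_{\ne}(L_{=},L_{\ne}))$ is an average of zeros.

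First I would fix a computable enumeration $M_0,M_1,\ldots$ of all Turing machines, with $L_0,L_1,\ldots$ the (R.E.) languages they accept; this enumeration, unlike the one used in Theorem~\ref{T:halting}, is fully computable. I then define the pursuer's language by
\[
\Delta\in L_{=}\iff M_{\|\Delta\|_1}\text{ accepts }\Delta,
\]
i.e.\ the pursuer always predicts the current history $\Delta$ using the machine whose index equals the number of disagreements seen so far. This is exactly the idea of Algorithm~\ref{A:mixed}, but with the error-count $d=\|\Delta\|_1$ now used as an index into an enumeration of \emph{all} R.E.\ languages rather than of total machines, and with the ``$d>X$'' cutoff removed. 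The resulting $L_{=}$ is itself R.E.: on input $\Delta$ one computes $d=\|\Delta\|_1$ and simulates $M_d$ on $\Delta$, accepting if and only if that simulation halts and accepts.

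The heart of the argument is a ``lock-on'' observation about the disagreement counter $d$. By definition $d$ increases by exactly $1$ on a disagreement ($\delta=1$) and is unchanged on an agreement, so it is monotone in the round number. Let $e$ be any index with $L_{\ne}=L_e$. As soon as $d$ reaches the value $e$, the pursuer is predicting with the very machine $M_e$ that defines the opponent, so for the current history $\Delta$ we have $O_{=}=[\,M_e\text{ accepts }\Delta\,]=O_{\ne}$ and therefore $\delta=O_{=}\oplus O_{\ne}=0$. Hence $d$ cannot advance past $e$: every round from that point on is an agreement, $d$ is frozen at $e$, and no further $1$'s appear in $\{\delta_n\}$. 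Since $d$ starts at $0$ and each $1$ increments it, the total number of disagreements is at most $e$, which is the required finiteness.

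The step I expect to be the main subtlety is precisely the point that distinguishes this case from the decidable one: the simulation of $M_d$ on $\Delta$ need not halt. This is harmless because the game reads only the \emph{language} $L_{=}$, i.e.\ the Boolean value $[\Delta\in L_{=}]$, and a non-halting simulation simply places $\Delta$ outside $L_{=}$ (the pursuer ``outputs $0$''). Crucially, the lock-on identity $O_{=}=O_{\ne}$ at $d=e$ holds as an equality of truth-values regardless of whether $M_e$ halts: if $M_e$ loops on $\Delta$ then both sides are $0$, and if it accepts then both are $1$. Thus the asymmetry that defeats the evader is exactly that the pursuer, being allowed to be merely R.E.\ rather than decidable, can afford to copy the opponent's possibly-divergent behaviour verbatim; it never has to certify non-membership, only to match it.
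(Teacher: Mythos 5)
Your proposal is correct and follows essentially the same route as the paper: the paper's Algorithm~\ref{A:pursuer} is exactly your $L_{=}$ (simulate the $d$'th machine on the history, where $d$ is the disagreement count, with non-termination counting as rejection), and the lock-on argument bounding the number of errors by the index of $L_{\ne}$ is the paper's argument as well. Your explicit treatment of the non-halting case is a welcome elaboration of a point the paper handles only briefly, but it is not a different proof.
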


\begin{proof}
We describe $L_{=}$ explicitly by means of an algorithm accepting it. This is
given in Algorithm~\ref{A:pursuer}.

\begin{algorithm}
\caption{Algorithm for learning an R.E.\ language}
\label{A:pursuer}
\begin{algorithmic}[1]
\Function{calculate bit}{$\Delta$}
\State Let $\mathcal{T}_0, \mathcal{T}_1, \ldots$ be an enumeration over all
Turing machines.
\State $d \gets \|\Delta\|_1$.
\Comment Number of prediction errors so far.
\State Simulate $\mathcal{T}_d$
\EndFunction
\end{algorithmic}
\end{algorithm}

Note that Algorithm~\ref{A:pursuer} does not have any ``Accept'' or ``Reject''
statements. It returns a bit only if $\mathcal{T}_d$ returns a bit and does not
terminate if $\mathcal{T}_d$ fails to terminate. To actually simulate
$\mathcal{T}_d$ and to encode the enumeration $\mathcal{T}_0, \ldots$,
Algorithm~\ref{A:pursuer} can simply use a universal Turing machine,
$\mathcal{U}$, and define the enumeration in a way such that
$\mathcal{U}$ accepts the input ``$d\#\Delta$'' if and only if $\mathcal{T}_d$
accepts the input $\Delta$.

To show that Algorithm~\ref{A:pursuer} cannot be countered, consider any
R.E.\ language to be chosen by Player~``$\ne$''. This language, $L_{\ne}$,
necessarily corresponds to the output of $\mathcal{T}_x$ for some (finite) $x$.
In total, Player~``$=$'' can lose at most $x$ rounds.
In every subsequent round, its output will be identical to that of
$\mathcal{T}_x$, and therefore identical to the bit chosen by Player~``$\ne$''.
\end{proof}

We see, therefore, that the complement of Algorithm~\ref{A:pursuer}'s
language cannot be learned by any
R.E.\ language. Player~``$\ne$'' cannot hope to win more than a finite number of
rounds.

Note that these results do not necessitate that $\Sigma=\Sigma^0_1$, the
R.E.\ languages. As long as $\Sigma$ is rich enough to allow implementing
Algorithm~\ref{A:pursuer}, the results hold. This is true, for example, for
$\Sigma$ sets that allow Oracle calls. In particular:

\begin{cor}\label{C:i_adversarial}
For all $i>0$, $\Sigma^0_i$ is adversarially learnable by $\Sigma^0_i$ but not
by $\Pi^0_i$;
$\Pi^0_i$ is adversarially learnable by $\Pi^0_i$ but not by $\Sigma^0_i$.
\end{cor}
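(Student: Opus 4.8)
The plan is to prove all four assertions by reducing them to the single-player-advantage result already established in Theorem~\ref{T:nonhalting}, using two structural symmetries of the IMP game together with one new ``dual'' construction. Concretely, the four claims amount to $\text{minmax}(\Sigma^0_i,\Sigma^0_i)=0$, $\text{minmax}(\Pi^0_i,\Pi^0_i)=0$, $\text{minmax}(\Pi^0_i,\Sigma^0_i)\ne 0$ and $\text{minmax}(\Sigma^0_i,\Pi^0_i)\ne 0$ (recall that the first argument is the pursuer's class and $S=S_{\ne}$). I would obtain the two diagonal (``learnable by itself'') identities from a relativised copy of Theorem~\ref{T:nonhalting}, and then transport results between $\Sigma^0_i$ and $\Pi^0_i$ by a complementation isomorphism.

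First I would relativise Algorithm~\ref{A:pursuer} to the oracle $\emptyset^{(i-1)}$. Since $\Sigma^0_i$ is exactly the class of languages recursively enumerable relative to $\emptyset^{(i-1)}$, an enumeration $\mathcal{T}_0,\mathcal{T}_1,\dots$ of the oracle machines with that oracle enumerates all of $\Sigma^0_i$. Running the pursuer verbatim --- set $d=\|\Delta\|_1$ and simulate $\mathcal{T}_d$ on $\Delta$ using a universal oracle machine --- yields a pursuer language $\{\Delta : \mathcal{T}_{\|\Delta\|_1}\text{ accepts }\Delta\}$ that is itself R.E.\ relative to $\emptyset^{(i-1)}$, hence in $\Sigma^0_i$, and the counting argument of Theorem~\ref{T:nonhalting} shows it loses at most $x$ rounds against the evader $\mathcal{T}_x$. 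Thus $\text{minmax}(\Sigma^0_i,\Sigma^0_i)=0$, the first claim.

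Next I would record the complementation isomorphism. If both players replace their languages $L_{=},L_{\ne}$ by their complements $\bar L_{=},\bar L_{\ne}$, then at each round both $O_{=}$ and $O_{\ne}$ flip, so by \eqref{Eq:O_eq}--\eqref{Eq:O_ne} and $\delta_i=O_{=}(i)\oplus O_{\ne}(i)$ the value $\delta_i$ is unchanged; an easy induction on $i$ (the histories $\Delta_i$ stay identical) shows the entire $\delta$-sequence, and therefore both payoffs, are preserved. Since $L\mapsto\bar L$ is a bijection carrying $\Sigma^0_i$ onto $\Pi^0_i$, this gives a payoff-preserving bijection of strategy profiles --- extending to mixed strategies by push-forward --- whence $\text{minmax}(\overline{\Sigma_{=}},\overline{\Sigma_{\ne}})=\text{minmax}(\Sigma_{=},\Sigma_{\ne})$. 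Taking $(\Sigma_{=},\Sigma_{\ne})=(\Sigma^0_i,\Sigma^0_i)$ gives $\text{minmax}(\Pi^0_i,\Pi^0_i)=0$, the second claim, while taking $(\Sigma^0_i,\Pi^0_i)$ shows the third and fourth claims are equivalent.

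It remains to exhibit an evader who defeats every $\Sigma^0_i$ pursuer, i.e.\ to show $\text{minmax}(\Sigma^0_i,\Pi^0_i)\ne0$; I expect this dual construction to be the crux. Mirroring Algorithm~\ref{A:pursuer}, the evader sets $d=\|\Delta\|_0$ (the number of rounds it has so far failed to differ), simulates the candidate pursuer $\mathcal{T}_d$ on $\Delta$, and outputs the negation of $\mathcal{T}_d$'s bit; as a set this is $L_{\ne}=\{\Delta:\mathcal{T}_{\|\Delta\|_0}\text{ does not accept }\Delta\}$. The two things to verify carefully --- and where a level miscount could creep in --- are, first, that negating a $\Sigma^0_i$ acceptance event places $L_{\ne}$ in $\Pi^0_i$ (not higher), and second, that the evader wins cofinitely against any pursuer $\mathcal{T}_x$: the counter $d$ increases only on rounds the evader loses, so either $d$ freezes below $x$ (meaning the evader already differs in all but finitely many rounds) or $d$ reaches $x$, after which the evader simulates the true pursuer and differs in every remaining round. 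Either way $S_{\ne}=1$, so $\text{minmax}(\Sigma^0_i,\Pi^0_i)=1\ne0$; combined with the complementation isomorphism this yields $\text{minmax}(\Pi^0_i,\Sigma^0_i)=1\ne0$, completing all four claims.
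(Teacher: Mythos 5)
Your proposal is correct, and its skeleton matches the paper's: the learnability halves come from relativising Algorithm~\ref{A:pursuer} to $\emptyset^{(i-1)}$, and the non-learnability halves come from the game's pursuer/evader symmetry. The difference is one of rigour rather than route. The paper disposes of non-learnability in a single sentence (``if Player~`$=$' has a winning learning strategy, Player~`$\ne$' does not''), whereas you supply the two ingredients actually needed to cash that sentence out: the complementation isomorphism (complementing both languages preserves $L_{=}\triangle L_{\ne}$, hence every $\delta_i$ and both payoffs, and carries $\Sigma^0_i$ bijectively onto $\Pi^0_i$), and an explicit dual evader in $\Pi^0_i$ that defeats every $\Sigma^0_i$ pursuer. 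Your care with the dual construction is warranted and is the genuinely non-obvious point: the naive ``symmetric'' object --- the complement of Algorithm~\ref{A:pursuer}'s language --- is \emph{not} a winning evader, because its hypothesis counter $\|\Delta\|_1$ would then increment on rounds the evader wins rather than on rounds it loses, so it never locks onto the true pursuer; switching the counter to $\|\Delta\|_0$, as you do, is exactly what restores the ``at most $x$ losses'' argument of Theorem~\ref{T:nonhalting} on the evader's side. Your observation that a single pure evader achieves $S_{\ne}=1$ against every pure pursuer, and hence against every mixed $D_{=}$, correctly yields $\text{minmax}(\Sigma^0_i,\Pi^0_i)=1$, with $\text{minmax}(\Pi^0_i,\Sigma^0_i)=1$ following by the isomorphism. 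In short: same approach as the paper, but you have filled a gap the paper leaves implicit, and the filled-in version is the one a careful reader would want.
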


\begin{proof}
To show the learnability results, we use Algorithm~\ref{A:pursuer}. To show
the non-learnability results, we appeal to the symmetric nature of the game:
if Player~``$=$'' has a winning learning strategy, Player~``$\ne$'' does not.
\end{proof}

\section{Conventional learnability}\label{S:learnability}

To adapt the IMP game for the study of conventional
(i.e., non-adversarial) learning and approximation,
we introduce the notion of \textbf{nonadaptive strategies}.

\begin{defi}\label{D:NA}
A \textbf{nonadaptive strategy} is a language, $L$, over $\{0,1\}^*$ such that
\[
\forall u, v, |u|=|v| \Rightarrow (u\in L \Leftrightarrow v\in L),
\]
where $|u|$ is the bit length of $u$.

Respective to an arbitrarily chosen (computable) enumeration $w_1, w_2,\ldots$
over the complete language, we define the function $\textit{NA}()$ such that,
for any language $L$, $\textit{NA}(L)$ is the language such that
\[
x\in \textit{NA}(L) \Leftrightarrow w_{|x|} \in L.
\]

Furthermore, for any collection of languages, $\Sigma$, we define
$\textit{NA}(\Sigma) = \{\textit{NA}(L) | L \in \Sigma\}.$

$\textit{NA}(\Sigma)$ is the \textbf{nonadaptive application} of $\Sigma$.
\end{defi}

To elucidate this definition,
consider once again a (computable) enumeration, $w_1, w_2,\ldots$ over the
complete language.

In previous sections, we have analysed the case where the two competing
strategies are adaptive (i.e., general). This was the case of adversarial
learning. Modelling the conventional learning problem is simply done by
restricting
$\Sigma_{\ne}$ to nonadaptive strategies. The question of whether a strategy
$L_{=}$ (or $D_{=}$) can learn $L$ is the question of whether it can learn
adversarially $\textit{NA}(L)$.
The reason this is so is because the bit output at any
round $i$ by a nonadaptive strategy is independent of any response made by
either player at any previous round: at each round $i$, $O_{\ne}(i+1)$, the
response of Player~``$\ne$'', as defined in \eqref{Eq:O_ne}, is a function of
$\Delta_i$, a word composed of exactly $i$ bits. Definition~\ref{D:NA} now adds
to this the restriction that the response must be invariant to the value of
these $i$ bits and must depend only on the bit length, $i$, which is to say
on the round number. Regardless of what the strategy of Player~``$=$'' is,
the sequence $O_{\ne}(1), O_{\ne}(2),\ldots$ output by Player~``$\ne$'' will
always remain the same. Thus, a nonadaptive strategy for Player~``$\ne$'' is
one where the player's output is a predetermined, fixed string of bits, and
it is this string that the opposing strategy of Player~``$=$''
must learn to mimic.

Note, furthermore, that if $\Sigma_{\textit{NA}}$ is the set of all nonadaptive
languages, then for every $i>0$ we have
\begin{equation}\label{Eq:na_sigma}
\textit{NA}(\Sigma^0_i) = \Sigma^0_i \cap \Sigma_{\textit{NA}}.
\end{equation}
The equality stems from the fact that calculating $w_{|x|}$ from $x$ and
vice versa (finding any $x$ that matches $w_{|x|}$) is, by definition,
recursive, so there
is a reduction from any $L$ to $\textit{NA}(L)$ and back.
If a language can
be computed over the input $w_{|x|}$ by means of a certain nonempty set of
quantifiers, no additional unbounded quantifiers are needed to compute it
from $x$.

This leads us to Definition~\ref{D:learn}.

\begin{defi}\label{D:learn}
We say that a collection of languages $\Sigma_{\ne}$ is (conventionally)
\textbf{learnable} by a collection of strategies $\Sigma_{=}$ if
$\text{minmax}\left(\Sigma_{=},\textit{NA}(\Sigma_{\ne})\right)=0$.

If a collection is learnable by $\Sigma^0_1$, we simply say that it is
\textbf{learnable}.
\end{defi}

\begin{cor}\label{C:i_learnable}
For all $i>0$, $\Sigma^0_i$ is learnable by $\Sigma^0_i$. In particular,
$\Sigma^0_1$ is learnable.
\end{cor}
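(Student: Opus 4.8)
The plan is to reduce the conventional-learnability claim directly to the adversarial-learnability result already established in Corollary~\ref{C:i_adversarial}. By Definition~\ref{D:learn}, showing that $\Sigma^0_i$ is learnable by $\Sigma^0_i$ amounts to showing that $\text{minmax}\left(\Sigma^0_i, \textit{NA}(\Sigma^0_i)\right)=0$. The key observation is that nonadaptive strategies are a \emph{restriction} on Player~``$\ne$'', not on Player~``$=$''. Intuitively, forcing the evader to play a predetermined, history-independent bit string can only make the pursuer's job easier, so the pursuer's winning guarantee for the unrestricted adversarial game should carry over verbatim.

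First I would invoke Equation~\eqref{Eq:na_sigma} to identify $\textit{NA}(\Sigma^0_i)$ with $\Sigma^0_i \cap \Sigma_{\textit{NA}}$, the set of nonadaptive languages that lie in $\Sigma^0_i$. This is a \emph{subset} of $\Sigma^0_i$. Next I would appeal to the monotonicity of minmax in the evader's strategy space: if $\Sigma_{\ne}' \subseteq \Sigma_{\ne}$, then every distribution $D_{\ne}'$ over $\Sigma_{\ne}'$ is also a distribution over $\Sigma_{\ne}$, so
\[
\text{minmax}(\Sigma_{=}, \Sigma_{\ne}') = \inf_{D_{=}} \sup_{D_{\ne}'} S(D_{=}, D_{\ne}') \le \inf_{D_{=}} \sup_{D_{\ne}} S(D_{=}, D_{\ne}) = \text{minmax}(\Sigma_{=}, \Sigma_{\ne}).
\]
Taking $\Sigma_{=} = \Sigma^0_i$, $\Sigma_{\ne} = \Sigma^0_i$, and $\Sigma_{\ne}' = \textit{NA}(\Sigma^0_i) = \Sigma^0_i \cap \Sigma_{\textit{NA}}$, this yields $\text{minmax}\left(\Sigma^0_i, \textit{NA}(\Sigma^0_i)\right) \le \text{minmax}(\Sigma^0_i, \Sigma^0_i)$.

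Finally I would close the argument by quoting Corollary~\ref{C:i_adversarial}, which states that $\Sigma^0_i$ is adversarially learnable by $\Sigma^0_i$, i.e.\ $\text{minmax}(\Sigma^0_i, \Sigma^0_i) = 0$. Combined with the displayed inequality and the fact that minmax is a payoff and hence nonnegative, this forces $\text{minmax}\left(\Sigma^0_i, \textit{NA}(\Sigma^0_i)\right) = 0$, which is exactly the definition of $\Sigma^0_i$ being learnable by $\Sigma^0_i$. The case $i=1$ gives the ``in particular'' clause. I do not expect any genuine obstacle here: the only point requiring care is the monotonicity step, and the one thing to double-check is that the enumeration-dependence in the definition of $\textit{NA}$ does not interfere — but since Algorithm~\ref{A:pursuer} from the proof of Theorem~\ref{T:nonhalting} wins against \emph{every} R.E.\ opponent (adaptive or not), and since any fixed choice of enumeration leaves $\textit{NA}(\Sigma^0_i) \subseteq \Sigma^0_i$ intact, the bound holds for every enumeration, matching the paper's convention that all results are enumeration-independent.
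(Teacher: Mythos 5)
Your proposal is correct and follows essentially the same route as the paper's own proof: the paper likewise observes that $\textit{NA}(\Sigma^0_i)\subseteq\Sigma^0_i$ via Equation~\eqref{Eq:na_sigma} and then invokes Corollary~\ref{C:i_adversarial}. You merely spell out the monotonicity-of-minmax step that the paper leaves implicit, which is a harmless (indeed helpful) elaboration.
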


\begin{proof}
We have already shown (Corollary~\ref{C:i_adversarial})
that $\Sigma^0_i$ is adversarially learnable by $\Sigma^0_i$, and
$\textit{NA}(\Sigma^0_i)$ is a subset of $\Sigma^0_i$, as demonstrated by
\eqref{Eq:na_sigma}.
\end{proof}

Constraining Player~``$\ne$'' to only be able to choose nonadaptive strategies
can only lower the $\text{minmax}$ value. Because it is already at $0$, it
makes no change: we are weakening the player that is already weaker.
It is more rewarding to constrain Player~``$=$'' and to consider the game
$\text{IMP}\left(\textit{NA}(\Sigma^0_i),\Sigma^0_i\right)$. Note, however, that
this is equivalent to the game
$\text{IMP}\left(\Sigma^0_i,\textit{NA}(\Pi^0_i)\right)$ under role reversal.

\begin{thm}\label{T:pi_learn}
$\Pi^0_1$ is learnable.
\end{thm}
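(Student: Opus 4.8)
The plan is to unfold Definition~\ref{D:learn}: proving that $\Pi^0_1$ is learnable means exhibiting a pursuer strategy witnessing $\text{minmax}(\Sigma^0_1,\textit{NA}(\Pi^0_1))=0$. Since $S_{\ne}\ge 0$ always, it suffices to produce, for every $\epsilon>0$, a (possibly incomputable) distribution $D_=$ over R.E.\ languages with $S_{\ne}(D_=,L_{\ne})\le\epsilon$ for every $L_{\ne}\in\textit{NA}(\Pi^0_1)$; because $S_{\ne}$ is linear in the evader's mixture, it is enough to bound it against each pure nonadaptive co-R.E.\ strategy. A pure $L_{\ne}=\textit{NA}(L)$ with $L\in\Pi^0_1$ is, by Definition~\ref{D:NA}, a fixed bit-sequence $b$ with $b_i$ the indicator of $w_{i-1}\in L$, whose $0$-positions $Z=\{i:w_{i-1}\notin L\}$ form an R.E.\ set; so the task becomes online prediction of a fixed co-R.E.\ sequence. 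The essential resource is that the pursuer reads the whole history $\Delta_{i-1}$: from $\Delta_{i-1}$ together with its own past moves it reconstructs $b_1,\dots,b_{i-1}$ exactly, so it enjoys full feedback even against a nonadaptive opponent.

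Next I would isolate why this is harder than the adversarial result of Theorem~\ref{T:nonhalting} and why a single deterministic learner cannot suffice. A $\Sigma^0_1$ player can \emph{confirm} a $1$ (membership in its R.E.\ language is semi-decidable) but cannot confirm a $0$; hence it cannot exactly reproduce a genuinely non-decidable co-R.E.\ target, which would demand a $\Pi^0_1$ language. Moreover, any \emph{total} deterministic pursuer is defeated outright: the sequence whose $i$-th bit is $1$ minus the pursuer's prediction on $b_1\cdots b_{i-1}$ is computable, hence co-R.E., and forces error density $1$. This is exactly the tension advertised in the introduction, and it tells us the witness must be a genuinely probabilistic/mixed strategy and that the realistic target is density-$0$, not finitely many, errors.

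For the construction I would run a feedback-driven weighted-majority (equivalently, Bayesian) scheme over the countable class of candidate targets. For decidable targets this already closes the problem cleanly: each computable sequence has a \emph{decidable}, hence admissible, predict-$1$ region, so a mixture that samples finitely many computable experts with prior $2^{-e}$ and predicts the posterior-weighted majority, demoting an expert only when feedback refutes a committed bit, makes $\O(\log(1/\text{prior of the truth}))$ mistakes; realising the incomputable full mixture as an honest distribution over finitely-supported, and therefore R.E., pure strategies keeps every draw legal. The remaining case is a non-decidable co-R.E.\ target. Here I would represent $b$ through its monotone ``limit from above'' $b=\inf_t b^{(t)}$ with each $b^{(t)}$ computable and non-increasing in $t$, use the $b^{(t)}$ as computable experts, and demote an expert only upon a \emph{confirmed} refutation (it committed to a $0$ that feedback reveals to be a $1$), so that the weight of a correct expert never decays.

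The hard part will be this last case, and specifically the \emph{enumeration-delay} errors: rounds where the correct expert still shows a $1$ at a position that truly lies in $Z$ but whose membership has not yet surfaced. The confirmed-refutation mistakes are controlled by the weighting argument, but the delay errors cannot be beaten by out-computing the opponent --- the evader is free to enumerate $Z$ arbitrarily slowly, so no fixed, nor even countable, schedule of simulation budgets can keep pace, and a naive ``predict the current approximation'' learner can be driven to error density equal to the density of $Z$. The crux is therefore to convert the two advantages we still hold --- that the evader is \emph{nonadaptive} (it commits to $b$ before the realised learner is drawn) and that we only need the Cesàro density of errors to vanish --- into a proof that, with probability one over the learner's randomisation, the delay-error density is $0$. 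I expect this averaging step, reconciling unbounded adversarial enumeration delay with a density-$0$ guarantee, to be the technical heart of the argument, and the place where the probabilistic (rather than merely mixed-of-finitely-many) character of the strategy is indispensable.
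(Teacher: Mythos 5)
Your reduction of the problem is right (a nonadaptive $\Pi^0_1$ opponent is a fixed co-R.E.\ bit-sequence predicted with full feedback), your observation that every total deterministic pursuer is killed by its own red-herring sequence matches the paper's later lemma, and you have correctly located the crux: a ``predict the current approximation'' learner suffers enumeration-delay errors of positive density, and no simulation budget can outrun an adversarially slow enumeration. But at exactly that point your proof stops. You say you \emph{expect} an averaging argument over the learner's randomisation to drive the delay-error density to $0$, without giving one, and it is not clear such an argument exists: for a fixed target sequence the set of delayed positions under your expert scheme is deterministic, so randomising the learner does not dilute it, and demoting experts only on confirmed refutations gives you no handle on positions whose true value is $0$ but which never get refuted in time. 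The weighted-majority machinery controls the wrong error type.

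The idea you are missing is that the learner's own output bit at round $n$ is a one-bit write-once memory cell readable (via the $\delta$-feedback, once the learner can reconstruct its own past outputs) at every later round, and this lets it answer co-R.E.\ queries \emph{exactly} rather than approximate them. Concretely, the question ``are at least $k$ of the words $w_i,\ldots,w_j$ outside $L_{\ne}$?'' is semi-decidable (simulate all $j+1-i$ witnesses in parallel and halt when $k$ halt), so the learner can spend $m$ rounds binary-searching for the exact number of non-members among a block of $2^m-1$ positions; once that count is known, each individual membership is decided by a \emph{terminating} computation (run the parallel simulation until exactly that many halt --- whatever has not halted by then provably never will). This eliminates enumeration delay entirely at a cost of $m$ sacrificed rounds per $2^m-1$ predicted rounds, which has density $0$ for a suitable schedule such as $m_t=t+2$. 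Randomisation enters only to bootstrap the first $m_0$ output bits (needed so the learner can decode its own past outputs from $\Delta_n$), and the hypothesis enumeration over candidate $L_{\ne}$ is layered on top much as you describe; the minmax is then driven to $0$ by a sequence of mixed strategies (or, as in the paper's corollary, a single probabilistic machine). Without the counting-by-self-signalling mechanism, the density-$0$ guarantee you need in the non-decidable case does not follow from anything in your sketch.
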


\begin{proof}

To begin, let us consider a simpler scenario than was discussed so far.
Specifically, we will consider a scenario in which the feedback available to
the learning algorithm at each point is not only $\Delta_n$, the information of
which rounds it had ``won'' and which it had ``lost'', but also
$O_{=}(n)$ and $O_{\ne}(n)$, what the bit output by each machine was, at
every step.\footnote{Because $O_{=}(n)\oplus O_{\ne}(n)\oplus \delta_n=0$,
using any two of these as input to the TM is equivalent to using all three,
because the third can always be calculated from the others.}

In this scenario, Player~``$=$'' can calculate a co-R.E.\ function by
calculating its complement in round $n$ and then reading the result as the
complement to $O_{=}(n)$, which is given to it in all later rounds.

For example, at round $n$ Player~``$=$'' may simulate a particular Turing
machine, $\mathcal{T}$, in order to test whether it halts. If it does halt,
the player halts and accepts the input, but it may also continue indefinitely.
The end effect is that if $\mathcal{T}$ halts then $O_{=}(n)=1$ and otherwise
it is
$0$. At round $n+1$, Player~``$=$'' gets new inputs. (Recall that if one views
the player as a Turing machine, it is effectively restarted at each round.)
The new input in the real IMP game is $\Delta_n$, but for the moment we
are assuming a simpler version where the input is the pair of strings
$(O_{=}(1)\ldots O_{=}(n),O_{\ne}(1)\ldots O_{\ne}(n))$. This being the case,
though whether $\mathcal{T}$ halts or not is in general not computable by a
$\Sigma^0_1$ player, once a simulation of the type described here is run at
round $n$, starting with round $n+1$ the answer is available to the player in
the form of $O_{=}(n)$, which forms part of its input.

More concretely, one algorithm employable by Player~``$=$'' against a known
nonadaptive language $\textit{NA}(L_{\ne})$ is one that calculates
``$w_{2n+1} \notin L_{\ne}$?'' (which is an R.E.\ function)
in every $2n$'th round, and then uses this
information in the next round in order to make the correct prediction.
This guarantees $S\left(L_{=},\textit{NA}(L_{\ne})\right)\le 1/2$.
However, it is possible to do better.

To demonstrate how, consider that Player~``$=$'' can determine the answer to the
question ``$|\{w_i,\ldots,w_j\}\setminus L_{\ne}| \ge k$?'' for any chosen
$i$, $j$ and $k$. The way to do this is to simulate simultaneously all $j+1-i$
Turing machine runs that calculate ``$w_l \notin L_{\ne}$?'' for each
$i\le l \le j$ and to halt if $k$ of them halt.
As with the previous example, by performing this algorithm at any stage $n$,
the algorithm will then be able to read out the result as $O_{=}(n)$
in all later rounds.

Consider, now, that this ability can be used to determine
$|\{w_i,\ldots,w_j\}\setminus L_{\ne}|$ exactly (rather than simply bounding it)
by means of a binary search, starting with the question
``$|\{w_i,\ldots,w_j\}\setminus L_{\ne}| \ge 2^{m-1}$?''
in the first round, and proceeding to increasingly finer determination of the
actual set size on each later round. Player~``$=$'' can therefore determine the
number of ``$1$'' bits in a set of $j+1-i=2^m-1$ outputs of a co-R.E.\ function
in this way in only $m$ queries,
after which the number will be written in binary form, from most significant
bit to least significant bit, in its $O_{=}$ input.
Once this cardinality has been
determined, Player~``$=$'' can compute via a terminating computation the value
of each of ``$w_l \in L_{\ne}$?'': the player will simulate, in
parallel, all $j+1-i$ machines, and will terminate the computation either
when the desired bit value is found via a halting of the corresponding
machine, or until the full cardinality of halting machines has been
reached, at which point, if the desired bit is not among the machines that
halted, then the player can safely conclude that its computation will never
halt.

Let $\{m_t\}$ be an arbitrary (computable) sequence with
$\lim_{t \to \infty} m_t = \infty$.
If Player~``$=$'' repeatedly
uses $m_t$ bits (each time picking the next value in the sequence) of its own
output in order to determine Player~``$\ne$'''s next $2^{m_t}-1$ bits, the
proportion of bits determined correctly by this will approach $1$.

However, the actual problem at hand is one where Player~``$=$'' does not have
access to its own output bits, $\left(O_{=}(1),\ldots,O_{=}(n)\right)$. Rather,
it can only see $\left(\delta_1,\ldots,\delta_n\right)$, the exclusive or (xor)
values of its bits and those of Player~``$\ne$''. To deal with this situation,
we use a variation over the strategy described above.

First, for convenience, assume that Player~``$=$'' knows the first $m_0$ bits
to be output by Player~``$\ne$''.
Knowing Player~``$\ne$'''s bits and having visibility as to
whether they are the same or different to Player~``$=$'''s bits give, together,
Player~``$=$'' access to its own past bits.

Now, it can use these first $m_0$ bits in order to encode, as before, the
cardinality of the next $2^{m_0}-1$ bits, and by this also their individual
values (as was demonstrated previously with the calculation of
``$w_l\in L_{\ne}$?'').
This now gives Player~``$=$'' the ability to win every one of the
next $2^{m_0}-1$ rounds. However, instead of utilising this ability to the
limit, Player~``$=$'' will only choose to win the next
$2^{m_0}-1-m_1$, leaving the remaining $m_1$ bits free to be used for
encoding the cardinality of the next $2^{m_1}-1$. This strategy can be
continued to all $m_t$.
The full list of criteria required of the sequence $\{m_t\}$ for this
construction to work and to ultimately lead to
$S\left(L_{=},\textit{NA}(L_{\ne})\right)=0$ is:
\begin{enumerate}
\item $\lim_{t \to \infty} m_t=\infty$.
\item $\forall t, m_{t+1}\le 2^{m_t}-1$.
\item $\lim_{t \to \infty} \frac{m_{t+1}}{2^{m_t}}=0$.
\end{enumerate}
A sequence satisfying all these criteria can easily be found,
e.g.\ $m_t=t+2$.

Two problems remain to be solved: (1) How to determine the value of the first
$m_0$ bits, and (2) how to deal with the fact that $L_{\ne}$ is not known.

We begin by tackling the second of these problems. Because $L_{\ne}$ is not
known, we utilise a strategy of enumerating over the possible languages,
similar to what is done in Algorithm~\ref{A:pursuer}. That is to say, we
begin by assuming that co-$L_{\ne}=L_0$ and respond accordingly. Then, if we
detect that the responses from Player~``$\ne$'' do not match those of $L_0$ we
progress to assume that co-$L_{\ne}=L_1$, etc.. We are not always in a position
to tell if our current hypothesis of $L_{\ne}$ is correct, but we can verify
that it matches at least the first $2^{m_t}-m_{t+1}-1$ bits of each $2^{m_t}-1$
set. If Player~``$=$'' makes any incorrect
predictions during any of these $2^{m_t}-m_{t+1}-1$ rounds, it can progress to
the next hypothesis. We note that it is true that Player~``$=$'' can remain
mistaken about the identity of $L_{\ne}$ forever, as long as $L_{\ne}$ is such
that the first $2^{m_t}-m_{t+1}-1$ predictions of every $2^{m_t}-1$ are correct,
but because these correct predictions alone are enough to ensure
$S\left(L_{=},\textit{NA}(L_{\ne})\right)=0$, the
question of whether the correct $L_{\ne}$ is ultimately found or not is moot.

To tackle the remaining problem, that of determining $m_0$ bits of $L_{=}$ in
order to bootstrap the process, we make use of mixed strategies.

Consider a mixed strategy involving probability $1/2^{m_0}$ for each of
$2^{m_0}$ strategies, differing only by the $m_0$ bits they assign as the first
bits for each language in order to bootstrap the learning process.
If co-$L_{\ne}=L_0$, of the $2^{m_0}$ strategies one will make the correct guess
regarding the first $m_0$ input bits, after which that strategy can ensure
$S\left(L_{=},\textit{NA}(L_{\ne})\right)=0$.
However, note that, if implemented as described so far,
this is not the case for any other $L_i$. Suppose, for example, that
co-$L_{\ne}=L_1$. All $2^{m_0}$ strategies begin by assuming, falsely, that
co-$L_{\ne}=L_0$, and all may discover later on that this assumption is
incorrect, but they may do so at different rounds. Because of this, a
counter-strategy can be designed to fool all $2^{m_0}$ learner strategies.

To avoid this pitfall, all strategies must use the same bit positions in
order to bootstrap learning for each $L_i$, so these bit positions must be
pre-allocated. We will use bits $a^2,\ldots,a^2+m_0-1$ in order to bootstrap
the $i$'th hypothesis, for some known $a=a(i)$, regardless of whether the
hypothesis $L_{=}=L_i$ is known to require checking
before these rounds, after, or not at all.
The full set of rounds pre-allocated in this way still has only density zero
among the integers, so even without a win for
Player~``$=$'' in any of these rounds its final payoff remains $1$.

Suppose, now, that $L_i$ is still not the assumption currently being verified
(or falsified) at rounds $a^2,\ldots,a^2+m_0-1$. The Hamming weight (number of
``$1$''s)
 of which $2^{m_0}-1$
bits should be encoded by Player~``$=$'' in these rounds' bits? To solve this,
we will pre-allocate to each hypothesis an infinite number of bit positions,
which, altogether for all hypotheses, still amount to a set of density $0$ among
the integers. The hypothesis
will continuously predict the values of this pre-allocated infinite sequence
of bits until it becomes the ``active'' assumption. If and when it does, it
will expand its predictions to all remaining bit positions.

This combination of $2^{m_0}$ strategies, of which one guarantees a payoff of
$1$, therefore guarantees in total an expected payoff of at least $1/2^{m_0}$.
We want to show, however, that
$\text{minmax}\left(\Sigma^0_1,\textit{NA}(\Pi^0_1)\right)=0$.
To raise from $1/2^{m_0}$
to $1$, we describe a sequence of mixed strategies for which
the expected payoff for Player~``$=$'' converges to $1$.

The $k$'th element in the sequence of mixed strategies will be composed of
$2^{m_0 k}$ equal probability pure strategies. The strategies will follow the
algorithm so far, but instead of moving from the hypothesis co-$L_{\ne}=L_i$ to
co-$L_{\ne}=L_{i+1}$ after a single failed attempt (which may be due to
incorrect bootstrap bits), the algorithm will try each $L_i$ language $k$
times. In total, it will guess at most $m_0 k$ bits for each language, which are
the $m_0 k$ bits defining the strategy.

This strategy ensures a payoff of at least $1-(1-1/2^{m_0})^k$, so converges to
$1$, as desired, for an asymptotically large $k$.

The full algorithm is described in Algorithm~\ref{A:universal_learning}.
It uses the function $\textit{triangle}$, defined as follows: let
\[
\textit{base}(x)=\left\lfloor\frac{\lfloor\sqrt{8x+1}\rfloor-1}{2}\right\rfloor
\]
and
\begin{equation}\label{Eq:triangle}
\textit{triangle}(x)=x-\textit{base}(x)(\textit{base}(x)+1)/2.
\end{equation}
The value of $\textit{triangle}(x)$ for $x=0, 1, 2, \ldots$ equals
\[
0, 0, 1, 0, 1, 2, 0, 1, 2, 3, 0, 1, 2, 3, 4, 0, 1, 2, 3, 4, 5,\ldots,
\]
describing a triangular walk through the nonnegative integers.

The algorithm is divided into two stages. In Step 1, the algorithm simulates
its actions in all previous rounds, but without simulating any (potentially
non-halting) Turing machine associated with any hypothesis. The purpose of
this step is to determine which hypothesis (choice of Turing machine and
bootstrapping) is to be used for predicting the next bit. Once the hypothesis
is determined, Step 2 once again simulates all previous rounds, only this time
simulating the chosen hypothesis wherever it is the active hypothesis. In this
way, the next bit predicted by the hypothesis can be determined.

The specific $\{m_t\}$ sequence used in Algorithm~\ref{A:universal_learning}
is $m_t=t+2$ (which was previously mentioned as an example of a sequence
satisfying all necessary criteria).

\begin{algorithm}
\caption{Algorithm for learning any co-R.E.\ language}
\label{A:universal_learning}
\begin{algorithmic}[1]
\State \Comment The strategy is a uniform mixture of $4^k$ algorithms.
\State \Comment We describe the $j$'th algorithm.
\Function {calculate bit}{$\Delta$}
\State $n \gets$ length of $\Delta$
\Comment The round number. Let $\Delta=\delta_1,\ldots,\delta_n$.
\State \Comment Step 1: Identify $h$, the current hypothesis.
\State $\textit{NonActiveHypotheses} \gets \{\}$
\State $\textit{PredPos} \gets \{\}$
\Comment A set managing which positions are predicted by which hypothesis.
\For {$i \in 0,\ldots,n$}
\If {$\exists (h,S,S')\in\textit{PredPos}$ such that $i\in S$}
\State Let $h,S,S'$ be as above.
\State \Comment $h=$ hypothesis number.
\State \Comment $S=$ predicted positions.
\State \Comment $S'=$ next positions to be predicted.

\State Let $m$ be such that $2^{m-1}-1=|S'|$.
\Comment We only construct $S'$ that have such an $m$.

\ElsIf {$\exists a,h$ such that $a^2=i$, $h=\textit{triangle}(a)$ and $h\notin\textit{NonActiveHypotheses}$}
\State \Comment First bootstrap bit for hypothesis $h$.
\State Let $h$ be as above.
\State $S \gets \{\}$
\State $S' \gets \{i,i+1\}$
\State $\textit{bootstrap}(h) \gets i$
\State $m \gets 2$

\ElsIf {$i=n$}
\Comment Unusable bits.
\State Accept input.
\Comment Arbitrary choice.
\Else
\State Next $i$.
\EndIf

\State $e \gets |\{x\in S| x>i\}|$
\If {$e\ge m$}
\State \Comment These bits are predicted accurately for the correct hypothesis.
\If {$i< n$ and $\delta_{i+1}=1$}
\State \Comment Incorrect prediction, so hypothesis is false.
\State $\textit{NonActiveHypotheses} \gets \textit{NonActiveHypotheses}\cup\{h\}$
\State $\textit{PredPos} \gets \{(\tilde{h},\tilde{S},\tilde{S}')\in \textit{PredPos}|\tilde{h} \ne h\}$
\EndIf
\ElsIf {$e=m-1$}
\Comment Bits with $e<m$ are used to encode next bit counts.
\State $\tilde{S} \gets \{\}$
\Comment New positions to predict on.
\State $p \gets \max(S')$

\algstore{AS:universal_learning}
\end{algorithmic}
\end{algorithm}

\begin{algorithm}
\begin{algorithmic}[1]
\algrestore{AS:universal_learning}

\While {$|\tilde{S}|<2^m-1$}
\State $p \gets p+1$

\If {$(\exists a,b$ such that $b\in\{0,1\}$, $a^2+b=p$ and $h=\textit{triangle}(a))$ or $(h=\text{mex}(\textit{NonActiveHypotheses})$ and $\nexists a,b,\tilde{h}$ such that $b\in\{0,1\}$, $a^2+b=p$, $\tilde{h}=\textit{triangle}(a)$, $\tilde{h}\notin\textit{NonActiveHypotheses})$}
\State \Comment ``$\text{mex}(T)$'' is the minimum nonnegative integer not appearing in $T$.
\State $\tilde{S} \gets \tilde{S}\cup\{p\}$
\EndIf
\EndWhile
\State $\textit{PredPos} \gets \textit{PredPos}\cup (h,S',\tilde{S})$
\EndIf
\EndFor

\State \Comment Step 2: Predict, assuming $h$.

\State $i \gets \textit{bootstrap}(h)$
\State $S \gets \{i,i+1\}$
\State $M \gets h \text{ div } k$
\Comment $T_M$ is the machine to be simulated.
$x \text{ div } y \defeq \lfloor x/y \rfloor$.
\State $\textit{try} \gets h \bmod k$
\Comment The try number of this machine.
\State $\textit{Prediction}(i) \gets (j \text{ div } 4^\textit{try}) \bmod 2$
\State $\textit{Prediction}(i+1) \gets (j \text{ div }
(2 \cdot 4^\textit{try})) \bmod 2$

\For {$i \in 0,\ldots,n$}
\If {$\exists S,S', (h,S,S')\in\textit{PredPos}$ and $i \in S$}
\State Let $m$ be such that $2^m-1=|S'|$.
\State $e \gets |\{x\in S| x>i\}|$
\If {$e=m-1$}
\State $\textit{counter} \gets 0$
\Comment Number of $1$'s in $S'$.
\EndIf
\If {$e\ge m$}
\If {$i=n$}
\If {$\textit{Prediction}(i)=1$}
\State Accept input.
\Else
\State Reject input.
\EndIf
\EndIf
\ElsIf {$i=n$}
\State Simulate $T_M$ simultaneously on all inputs in $S'$ until
$\textit{counter}+2^e$ are accepted.
\State \Comment If this simulation does not terminate, this is a rejection
of the input.
\State Accept input.
\Else

\algstore{AS:universal_learning2}
\end{algorithmic}
\end{algorithm}

\begin{algorithm}
\begin{algorithmic}[1]
\algrestore{AS:universal_learning2}

\If {$\textit{Prediction}(i)\ne\delta_i$}
\Comment Previous simulation terminated.
\State $\textit{counter} \gets \textit{counter}+2^e$
\Comment Binary search.
\EndIf
\If {$e=0$}
\Comment $\textit{counter}$ holds the number of terminations in $S'$.
\State Simulate $T_M$ simultaneously on all inputs in $S'$ until $\textit{counter}$ are accepted.
\Comment Guaranteed to halt, if hypothesis is correct.
\State Let $\textit{Prediction}(x)$ be $0$ on all $x\in S'$ that terminated, $1$ otherwise.
\EndIf
\EndIf
\EndIf
\EndFor
\EndFunction
\end{algorithmic}
\end{algorithm}

\end{proof}

Some corollaries follow immediately.

\begin{cor}\label{C:probabilistic}
There exists a probabilistic Turing machine that is able to learn any language
in $\Pi^0_1$ with probability $1$.
\end{cor}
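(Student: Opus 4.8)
The plan is to realise the learner as a single probabilistic Turing machine equipped with a read-only random tape holding an infinite fair-coin bit-string $r$, so that each fixing of $r$ yields a deterministic strategy $L_=^{(r)}$ and the machine as a whole induces a distribution $D_=$ over $\Sigma^0_1$ strategies. Under this view the restart-at-each-round model is harmless: re-running on the same $r$ reproduces the same choices. Since $S_\ne\in[0,1]$ always, to say the machine ``learns $L_\ne$ with probability $1$'' is to say that $S_\ne(L_=^{(r)},\textit{NA}(L_\ne))=0$ for $r$-almost-every $r$, and because $S_\ne\ge 0$ this is in turn equivalent to $\E_r[S_\ne(L_=^{(r)},\textit{NA}(L_\ne))]=0$. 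I would prove the almost-sure statement directly. Note that, unlike the worst-case $\text{minmax}$ analysis of Theorem~\ref{T:pi_learn}, here the target language $L_\ne\in\Pi^0_1$ is fixed, so there is no adaptive adversary mediating between the machine's random choices; this is exactly what will let us push the success probability all the way to $1$.

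The construction is the ``$k\to\infty$'' limit of Algorithm~\ref{A:universal_learning}. In that algorithm the mixture ranges over $4^k$ deterministic strategies indexed by $j$, the bootstrap bits used by the $\textit{try}$-th attempt at machine $\mathcal{T}_M$ are read off from fixed bit-fields of $j$ (the lines setting $\textit{Prediction}(i)$ and $\textit{Prediction}(i+1)$), and the retry count per machine is capped at $k$. I would instead let the index $j$ be the infinite random string $r$, so that \emph{every} (machine $M$, attempt $t$) pair --- now with $t$ unbounded --- receives its own dedicated, independent, fair-coin bootstrap guess, while keeping the existing schedule that places the bootstrap bits of attempt $t$ of hypothesis $h=\textit{triangle}(a)$ at the square positions $a^2,a^2+1$. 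Each hypothesis already receives infinitely many such slots, so no new machinery is needed to host unboundedly many retries; only the cap $k$ is removed.

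For correctness, fix $L_\ne\in\Pi^0_1$ and let $x$ be the least index with $\text{co-}L_\ne=L_x$, the correct hypothesis. Hypothesis $x$ is scheduled for infinitely many independent retries, and a given retry ``succeeds'' (guesses its $m_0$ bootstrap bits correctly) with probability $2^{-m_0}>0$; since distinct retries draw disjoint, independent fields of $r$, the probability that \emph{all} of them fail is $\prod_t (1-2^{-m_0})=0$. Thus with probability $1$ some retry of the correct hypothesis guesses correctly, and from that point on this hypothesis predicts every one of its assigned bits correctly, is never falsified in Step~1, remains the active hypothesis, and wins every round outside the pre-allocated bootstrap positions; those positions have density $0$ among the integers, so their contribution to the payoff vanishes and $S_\ne(L_=^{(r)},\textit{NA}(L_\ne))=0$ on this event, giving the claim via Definition~\ref{D:learn} and the nonadaptive reduction of Section~\ref{S:learnability}. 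The main obstacle is bookkeeping rather than probability: one must verify that admitting infinitely many retries per hypothesis still leaves the union of all bootstrap positions (over all hypotheses and all attempts) of density $0$, and that once a correctly-bootstrapped copy of the correct hypothesis becomes active it is never dislodged by the still-scheduled wrong or not-yet-correct retries --- exactly the invariants already maintained by the \textit{NonActiveHypotheses} and \textit{PredPos} structures of Algorithm~\ref{A:universal_learning}, now to be re-checked in the uncapped regime.
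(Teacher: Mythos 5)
Your probabilistic skeleton --- a triangular walk giving each hypothesis unboundedly many retries, each retry independently guessing its $m_0$ bootstrap bits with success probability $2^{-m_0}$, so that almost surely some retry of the correct hypothesis succeeds and thereafter wins all but a density-zero set of rounds --- is exactly the paper's plan. The gap is the sentence in which you declare the restart-at-each-round model ``harmless'' by equipping the machine with a persistent read-only random tape $r$ shared across all rounds. That is precisely the assumption the paper does not grant itself: in this framework the player is re-initialised on every input $\Delta_n$, so a probabilistic TM draws \emph{fresh} coins in each round and has no way to reproduce the random choices it made earlier when it re-simulates its own history --- which Algorithm~\ref{A:universal_learning} must do in order to know which hypothesis is active and what it predicted. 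The entire technical content of the paper's proof of Corollary~\ref{C:probabilistic} is the mechanism that repairs this: the bootstrap phase is extended from $m_0$ to $2m_0$ bits; in the first $m_0$ positions the machine simply \emph{outputs} fresh uniform random bits, and because Player~``$\ne$'' is nonadaptive, the recorded $\delta_n$ (the random bit xored with $O_{\ne}(n)$) is itself a uniform random bit that appears in the input $\Delta$ of every later round. The machine thus commits its coin flips to the observable game history and reads them back deterministically, using them in the second block of $m_0$ bootstrap positions exactly as the deterministic mixture used the index $j$. Without this device (or an explicit change of machine model, which would alter the content of the statement), your construction is not well defined in the paper's setting.

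A secondary inaccuracy: even granting a shared tape, a fixing of $r$ yields a language recursive in $r$, which for almost every $r$ is not R.E., so your assertion that the machine induces a distribution over $\Sigma^0_1$ strategies is not right; it happens to be immaterial to the corollary as stated, which only asserts the existence of a probabilistic learner. The bookkeeping issues you correctly flag at the end (density zero of the union of all bootstrap positions over unboundedly many retries, and that a correctly bootstrapped copy of the correct hypothesis is never falsified) are indeed all that remains once the persistence of the random choices is handled.
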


\begin{proof}
Instead of using a mixed strategy, it is possible to use probabilistic Turing
machines in order to generate the $m_0$ guessed bits that bootstrap each
hypothesis. In this case, there is neither a need for a mixed strategy nor a
need to consider asymptotic limits: a single probabilistic Turing machine can
perform a triangular walk over the hypotheses for $L_{\ne}$, investigating
each option an unbounded number of times. The probability that for the correct
$L_{\ne}$ at least one bootstrap guess will be correct in this way equals $1$.

The method for doing this is essentially the same as was described before.
The only caveat is that because the probabilistic TM is re-initialised at each
round and because it needs, as part of the algorithm, to simulate its actions
in all previous rounds, the TM must have a way to store its random choices,
so as to make them accessible in all later rounds.

The way to do this is to extend the hypothesis ``bootstrap'' phase from $m_0$
bits to $2 m_0$ bits. In each of the first $m_0$ bits,
the TM outputs a uniform random bit. The
$\delta_n$ bit available to it in all future rounds is then this random bit
xor the output of Player~``$\ne$''. $\delta_n$ is therefore also a uniform
random bit. In this way, in all future rounds the TM has access to these $m_0$
consistent random bits. It can then use these in the second set of $m_0$
bootstrap bits as was done with the $j$ value in the deterministic set-up.
\end{proof}

We note, as before, that the construction described continues to hold, and
therefore the results remain true, even if Oracles
are allowed, that are accessible to both players, and, in particular,
the results hold for any $\Pi^0_i$ with $i>0$:

\begin{cor}
For all $i>0$, $\Pi^0_i$ is learnable by $\Sigma^0_i$.
\end{cor}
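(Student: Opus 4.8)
The plan is to relativise the proof of Theorem~\ref{T:pi_learn}. Fix $i>0$ and let $A=\emptyset^{(i-1)}$ denote the $(i-1)$-fold Turing jump (with $\emptyset^{(0)}=\emptyset$). By the standard characterisation of the arithmetical hierarchy, $\Sigma^0_i$ is exactly the class of languages recursively enumerable in $A$, and $\Pi^0_i$ is exactly the class of languages whose complement is R.E.\ in $A$. The goal is to show $\text{minmax}(\Sigma^0_i,\textit{NA}(\Pi^0_i))=0$, which by Definition~\ref{D:learn} is precisely the assertion that $\Pi^0_i$ is learnable by $\Sigma^0_i$.

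First I would observe that the proof of Theorem~\ref{T:pi_learn} invokes Turing machines only through three kinds of operations: enumerating machines $\mathcal{T}_0,\mathcal{T}_1,\ldots$, simulating a chosen machine and detecting whether it halts, and performing recursive bookkeeping (the $\textit{triangle}$ walk, the position sets $\textit{PredPos}$, the binary search over cardinalities, and so on). I would then replace the enumeration by an enumeration of \emph{oracle} Turing machines equipped with the oracle $A$, replace each simulation by the corresponding relativised simulation $\mathcal{T}_d^{A}$, and replace every occurrence of ``recursive'' by ``recursive in $A$''. Because both the learner and the adversary are granted access to $A$, each co-R.E.-in-$A$ predicate ``$w_l\notin L_{\ne}$?'' is semidecidable relative to $A$, so the parallel-simulation and binary-search subroutines that first count and then pinpoint the non-members of $L_{\ne}$ among $w_i,\ldots,w_j$ go through verbatim, with ``halting'' replaced throughout by ``halting with oracle $A$''.

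The crucial bookkeeping step is to confirm that the resulting learner strategy $L_{=}$ still lies in $\Sigma^0_i$. Its defining algorithm (Algorithm~\ref{A:universal_learning}) accepts exactly when one of the relativised simulations halts, while all surrounding manipulation is recursive in $A$; hence $L_{=}$ is R.E.\ in $A$, i.e.\ $L_{=}\in\Sigma^0_i$. I would likewise note that the reduction underlying \eqref{Eq:na_sigma} relativises, since passing between $x$ and $w_{|x|}$ is recursive and therefore recursive in $A$, giving $\textit{NA}(\Pi^0_i)\subseteq\Pi^0_i$ and preserving the nonadaptive structure the argument relies on. With these substitutions in place, a $\Pi^0_i$ choice $L_{\ne}=\textit{NA}(L)$ plays precisely the role that a $\Pi^0_1$ choice played in Theorem~\ref{T:pi_learn}: the relativised learner makes at most a density-zero set of errors against it, forcing $S(L_{=},\textit{NA}(L))=0$ in the limit, and the mixed-strategy bootstrapping drives the expected payoff to $1$ exactly as before.

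I expect the main obstacle to be purely one of \emph{faithfulness of the relativisation}: verifying that no step of the original proof tacitly used the oracle being trivial. The two places to check with care are that the counting-by-parallel-simulation subroutine remains correct when ``halts'' means ``halts relative to $A$'' (it does, since a single fixed oracle can be shared across the finitely many simultaneous relativised simulations), and that the self-simulation in Step~1 of Algorithm~\ref{A:universal_learning}, which reconstructs the learner's own past behaviour \emph{without} running any hypothesis machine, stays recursive in $A$ rather than climbing higher in the hierarchy. Once these are confirmed, the conclusion $\text{minmax}(\Sigma^0_i,\textit{NA}(\Pi^0_i))=0$ follows immediately, and the corollary is proved.
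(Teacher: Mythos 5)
Your proposal is correct and takes essentially the same route as the paper: the paper disposes of this corollary with the single remark that the construction of Theorem~\ref{T:pi_learn} ``continues to hold \ldots even if Oracles are allowed, that are accessible to both players.'' You have simply made explicit (via Post's theorem and the oracle $A=\emptyset^{(i-1)}$) the relativisation that the paper leaves implicit, including the two sanity checks on the parallel-simulation counting and the Step~1 self-simulation, which is a faithful elaboration rather than a different argument.
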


Furthermore:

\begin{cor}\label{C:beyond}
For all $i>0$, the collection of languages learnable by $\Sigma^0_i$ is a
strict superset of $\Sigma^0_i\cup\Pi^0_i$.
\end{cor}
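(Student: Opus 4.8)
The statement has two halves: that every language in $\Sigma^0_i\cup\Pi^0_i$ is learnable by $\Sigma^0_i$, and that some learnable language lies \emph{outside} $\Sigma^0_i\cup\Pi^0_i$. The first half I would obtain for free from the two preceding corollaries. Corollary~\ref{C:i_learnable} gives $\text{minmax}(\Sigma^0_i,\textit{NA}(\Sigma^0_i))=0$, and the corollary immediately above gives $\text{minmax}(\Sigma^0_i,\textit{NA}(\Pi^0_i))=0$. Since for any fixed $D_{=}$ the quantity $\sup_{D_{\ne}}S(D_{=},D_{\ne})$ taken over a single target is dominated by the supremum over the whole collection, learnability of a collection entails learnability of each of its members. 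Hence every $L\in\Sigma^0_i\cup\Pi^0_i$ is individually learnable by $\Sigma^0_i$, which gives the ``$\supseteq$'' inclusion.

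For strictness I would exhibit a single language $L\notin\Sigma^0_i\cup\Pi^0_i$ that is nonetheless learnable. The key idea is that learnability demands only a vanishing \emph{density} of prediction errors, so anything confined to a density-$0$ set of rounds is invisible to the payoff. Concretely, let $K$ be $\Sigma^0_i$-complete (e.g.\ the $i$-th jump of $\emptyset$), so that $\bar K\in\Pi^0_i\setminus\Sigma^0_i$, and define $L$ to be empty except on those words $w_n$ whose index $n$ is a perfect square: writing the $s$-th square position as $s^2$, put $w_{s^2}\in L$ iff $j\in K$ when $s=2j$, and $w_{s^2}\in L$ iff $j\notin K$ when $s=2j+1$. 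Because $j\mapsto (2j)^2$ and $j\mapsto(2j+1)^2$ are recursive, $L\in\Sigma^0_i$ would force $\bar K\in\Sigma^0_i$ (reading off the odd-square coordinates of $L$), and $L\in\Pi^0_i$ would likewise force $\bar K\in\Sigma^0_i$ (reading off the even-square coordinates of $\bar L$); both contradict $\Sigma^0_i\ne\Pi^0_i$, so $L\notin\Sigma^0_i\cup\Pi^0_i$.

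It remains to check that this $L$ is learnable by $\Sigma^0_i$, and for this I would simply let Player~``$=$'' play the decidable pure strategy $L_{=}=\emptyset\in\Delta^0_1\subseteq\Sigma^0_i$. Against the nonadaptive target $\textit{NA}(L)$ this strategy outputs $0$ at every round, so $\delta_i=1$ exactly when $w_{i-1}\in L$, which happens only at square-indexed rounds. The number of such rounds up to $N$ is $O(\sqrt{N})$, whence $\sum_{n\le N}\delta_n/(2N)\to 0$ and therefore $S(L_{=},\textit{NA}(L))=0$. Thus $\text{minmax}(\Sigma^0_i,\textit{NA}(\{L\}))=0$, i.e.\ $L$ is learnable, completing the strict inclusion. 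This half is entirely oblivious to $i$: only the choice of the complete set $K$ depends on the level, while the density argument is identical throughout.

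The step I expect to be the main obstacle is pinning down the intended sense of ``the collection of languages learnable by $\Sigma^0_i$.'' Under the natural reading as the set of \emph{individually} learnable languages, the argument above is complete. If instead one wants the union $\Sigma^0_i\cup\Pi^0_i$ to be learnable \emph{collectively} (a single strategy sequence succeeding against every target at once), then a naive interleaving of the R.E.\ learner of Algorithm~\ref{A:pursuer} with the co-R.E.\ learner of Algorithm~\ref{A:universal_learning} fails, since splitting the rounds between them costs a constant fraction. The fix I would pursue is to run a single hypothesis enumeration in the style of Algorithm~\ref{A:universal_learning} that interleaves both R.E.-target hypotheses (verified as in Algorithm~\ref{A:pursuer}) and co-R.E.-target hypotheses (verified by the cardinality-counting device), each allocated bootstrap positions along the density-$0$ triangular walk; whichever regime the target falls in, the correct hypothesis is eventually reached and the error density tends to $0$. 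The strict example, being a single language, needs none of this machinery.
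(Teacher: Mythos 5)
Your proof is correct, and the inclusion half matches the paper: the paper likewise derives it from the two preceding corollaries, remarking that the $\Sigma^0_i$ languages can be folded into Algorithm~\ref{A:universal_learning} as additional hypotheses --- exactly the interleaving you sketch in your final paragraph (and your weaker ``individually learnable'' reading already suffices for the set-theoretic statement). Where you genuinely diverge is the strictness half. The paper does not build a single padded witness; it introduces the class $\Sigma^{(c)}_i$ of languages decidable with at most $c$ queries to a $\Sigma^0_i$ oracle (which contains, e.g., the xor of two $\Sigma^0_i$ languages and hence escapes $\Sigma^0_i\cup\Pi^0_i$), and then nontrivially adapts Algorithm~\ref{A:universal_learning} to learn all of it, spending $cm$ feedback bits to predict $2^m-1$ positions and tightening the condition on $\{m_t\}$ to $c\,m_{t+1}\le 2^{m_t}-1$. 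Your construction is far more elementary --- the all-zeros pure strategy already wins because your language $L$ differs from $\emptyset$ only on the density-zero set of square-indexed rounds, and the $K$/$\bar K$ coding on even/odd squares cleanly rules out $L\in\Sigma^0_i$ and $L\in\Pi^0_i$. The trade-off is informativeness: your witness has $\text{DisSim}(L,\emptyset)=0$, so it exploits the density-zero blind spot of the payoff rather than exhibiting any new learning power, whereas the paper's argument shows that a whole structurally richer class (one that can differ from every level-$i$ language on a positive-density set) is learnable, which is what motivates the paper's subsequent discussion of how far beyond level $i$ the learnable class might reach. Both arguments validly establish the corollary as literally stated.
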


\begin{proof}
We have already shown that $\Sigma^0_i$ and $\Pi^0_i$ are both learnable by
$\Sigma^0_i$. Adding the $\Sigma^0_i$ languages as additional hypotheses to
Algorithm~\ref{A:universal_learning} we can see that the set
$\Sigma^0_i \cup \Pi^0_i$ is also learnable.

To give one example of a family of languages beyond this set
which is also learnable by $\Sigma^0_i$, consider the following.
Let $\Sigma^{(c)}_i$, for a fixed $c>1$, be the set of languages recognisable
by a $\Delta^0_0$ Turing machine which can make at most $c$ calls to a
$\Sigma^0_i$ Oracle.

This set contains $\Sigma^0_i$ and $\Pi^0_i$, but it also
contains, for example, the xor of any two
languages in $\Sigma^0_i$, which is outside of
$\Sigma^0_i\cup\Pi^0_i$, and therefore strictly
beyond the $i$'th level of the arithmetic hierarchy.

We will adapt Algorithm~\ref{A:universal_learning} to learn $\Sigma^{(c)}_i$.
The core of Algorithm~\ref{A:universal_learning} is its ability to use
$m$ bits of $\Delta_n$ in order to predict $2^m-1$ bits. We will, instead, use
$cm$ bits in order to predict the same amount. Specifically, we will use the
first $m$ bits in order to predict the result of the first Oracle call in each
of the predicted $2^m-1$ positions, the next $m$ bits in order to predict
the second Oracle call in each of the predicted $2^m-1$ positions, and so
on.

In total, for this to work, all we need is to replace criterion $2$ in our
list of criteria for the $\{m_t\}$ sequence with the new criterion
\[
\forall t, c m_{t+1}\le 2^{m_t}-1.
\]
An example of such a sequence is $m_t=t+\max(c,5)$.
\end{proof}

In fact, Algorithm~\ref{A:universal_learning} can be extended even beyond
what was described in the proof to Corollary~\ref{C:beyond}. For example,
instead of using a constant $c$, it is possible to adapt the algorithm to
languages that use $c(n)$ Oracle calls at the $n$'th round, for a
sufficiently low-complexity $c(n)$ by similar methods.

Altogether, it
seems that R.E.\ learning is significantly more powerful than being able to
learn merely the first level of the arithmetic hierarchy, but we do not know
whether it can learn every language in $\Delta^0_2$. Indeed, we have no
theoretical result that implies R.E.\ learning cannot be even more powerful
than the second level of the arithmetic hierarchy.

A follow-up question which may be asked at this point is whether it was
necessary to use a mixed strategy, as was used in the proof of
Theorem~\ref{T:pi_learn}, or whether a pure strategy could have been
designed to do the same.

In fact, no pure strategy would have sufficed:

\begin{lemma}
For all $i$,
\[
\inf_{L_{=}\in \Sigma^0_i} \sup_{L_{\ne}\in \textit{NA}(\Pi^0_i)} S(L_{=},L_{\ne})=1.
\]
\end{lemma}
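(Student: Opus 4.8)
Since here $S = S_{\ne}$ and $S_{\ne}(L_{=},L_{\ne}) \le 1$ always, the inner supremum over $L_{\ne}$ is at most $1$ for every fixed pursuer strategy $L_{=}$, so the whole expression is at most $1$. The plan is therefore to establish the matching lower bound in the sharpest possible form: for \emph{every} pure strategy $L_{=} \in \Sigma^0_i$ I will exhibit a single nonadaptive evader strategy $L_{\ne} \in \textit{NA}(\Pi^0_i)$ against which Player~``$\ne$'' wins \emph{every} round, so that $S_{\ne}(L_{=},L_{\ne}) = 1$. This is precisely the ``red herring''/``elusive model'' construction alluded to in the introduction, now run against a fixed R.E.\ (or $\Sigma^0_i$) pursuer. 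Once such an $L_{\ne}$ is produced for each $L_{=}$, the inner supremum equals $1$ for every $L_{=}$, and hence so does the outer infimum.

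The key observation driving the construction is that if the evader wins every round, then the recorded history is forced to be the constant all-ones string: $\delta_n = 1$ for all $n$ gives $\Delta_{n-1} = 1^{n-1}$. Consequently the pursuer's move at round $n$ is the predetermined bit $O_{=}(n) = [\,1^{n-1} \in L_{=}\,]$, which the evader, choosing $L_{\ne}$ with full knowledge of $L_{=}$, can anticipate in advance. I therefore define the evader's underlying language $L$ so that its nonadaptive output is, at each round, the negation of this anticipated bit: for every length $\ell \ge 0$, put $w_\ell \in L \iff 1^\ell \notin L_{=}$, and take $L_{\ne} = \textit{NA}(L)$. By the definition of $\textit{NA}$ together with $|\Delta_{n-1}| = n-1$, this makes $O_{\ne}(n) = [\,w_{n-1} \in L\,] = [\,1^{n-1} \notin L_{=}\,]$.

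Two checks complete the argument. First, $L \in \Pi^0_i$: to decide $w \in L$ one computably locates the index $\ell$ with $w = w_\ell$ (possible since the enumeration is computable and exhausts all words) and then evaluates ``$1^\ell \notin L_{=}$'', which is a $\Pi^0_i$ predicate because $L_{=} \in \Sigma^0_i$; hence $L_{\ne} = \textit{NA}(L) \in \textit{NA}(\Pi^0_i)$, and the same bookkeeping goes through verbatim with Oracles, giving the result uniformly in $i$. Second, a one-line induction confirms that the all-ones history is a genuine fixed point of the deterministic play: assuming $\Delta_{n-1} = 1^{n-1}$, the two outputs satisfy $O_{\ne}(n) = 1 - O_{=}(n)$ by construction, so $\delta_n = O_{=}(n)\oplus O_{\ne}(n) = 1$ and $\Delta_n = 1^n$. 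Thus $\delta_n = 1$ for all $n$; both the $\liminf$ and the $\limsup$ in the payoff definitions equal $1/2$, and $S_{\ne}(L_{=},L_{\ne}) = 1$ as required.

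I expect the only real subtlety to be conceptual rather than computational: one must see that the nonadaptive restriction costs the evader nothing here, precisely because forcing a win in every round collapses the history to a single predetermined string, so that a fully adaptive evader could do no better than the precomputed opposite sequence. The remaining care is purely clerical --- the complexity bookkeeping that keeps $L$ inside $\Pi^0_i$ (and inside the corresponding Oracle level for general $i$) and the indexing convention for the enumeration so that $\textit{NA}$ is well defined at the empty string. This lemma should be read against Theorem~\ref{T:pi_learn}: it certifies that the mixed strategy used there was unavoidable, since against the worst-case nonadaptive evader no single pure $\Sigma^0_i$ pursuer can avoid losing asymptotically all of the rounds.
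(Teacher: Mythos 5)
Your proof is correct and follows essentially the same route as the paper: the ``red herring'' construction in which the nonadaptive evader outputs, at round $n$, the complement of $L_{=}$'s response to the history $1^{n-1}$, forcing the all-ones history and a win for Player~``$\ne$'' in every round. Your additional checks --- that the resulting language lies in $\Pi^0_i$ and that the all-ones history is indeed a fixed point of the play --- are exactly the details the paper leaves implicit.
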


This result is most interesting in the context of Corollary~\ref{C:probabilistic},
because it describes a concrete task that is accomplishable by a probabilistic
Turing machine but not by a deterministic Turing machine.

\begin{proof}
We devise for each $L_{=}$ a specific $L_{\ne}$ antidote. The main difficulty
in doing this is that we cannot choose, as before, $L_{\ne}=\text{co-}L_{=}$,
because $L_{\ne}$ is now restricted to be nonadaptive, whereas $L_{=}$ is
general.

However, consider $L_{\ne}$ such that its bit for round $k$ is the complement
of $L_{=}$'s response on $\Delta_{k-1}=1^{k-1}$.
This is a nonadaptive strategy, but
it ensures that $\Delta_k$ will be $1^k$ for every $k$.
Effectively, $L_{\ne}$ describes $L_{=}$'s ``red herring sequence''.
\end{proof}

\section{Approximability}\label{S:approximability}

When \emph{both} players' strategies are restricted to be nonadaptive, they
have no means
of learning each other's behaviours: determining whether their next output bit
will be $0$ or $1$ is done solely based on the present round number, not on any
previous outputs. The output of the game is therefore solely determined by
the dissimilarity of the two independently-chosen output strings.

\begin{defi}
We say that a collection of languages $\Sigma_{\ne}$ is
\textbf{approximable} by a collection of strategies $\Sigma_{=}$ if
$\text{minmax}\left(\textit{NA}(\Sigma_{=}),\textit{NA}(\Sigma_{\ne})\right)=0$.

If a collection is approximable by $\Sigma^0_1$, we simply say that it is
\textbf{approximable}.
\end{defi}

In this context it is clear that for any $\Sigma$
\[
\sup_{L_{\ne}\in\textit{NA}(\Sigma)} \inf_{L_{=}\in\textit{NA}(\Sigma)} S(L_{=},L_{\ne})=0,
\]
because $L_{=}$ can always be chosen to equal $L_{\ne}$,
but unlike in the case of adversarial learning, here
mixed strategies do make a difference.

Though we do not know exactly what the value of
$\text{minmax}\left(\textit{NA}(\Sigma^0_1)\right)$ is,
we do know the following.

\begin{lemma}\label{L:na_sup}
If $D_{=}$ and $D_{\ne}$ are
mixed strategies from $\textit{NA}(\Sigma^0_1)$, then
\begin{equation}\label{Eq:maxmin_mixed}
\sup_{D_{\ne}} \inf_{D_{=}} E\left(\limsup_{N\to\infty} \sum_{n=1}^{N} \frac{\delta_n}{N}\right) \ge \frac{1}{2}
\end{equation}
and
\begin{equation}\label{Eq:minmax_mixed}
\inf_{D_{=}} \sup_{D_{\ne}} E\left(\limsup_{N\to\infty} \sum_{n=1}^{N} \frac{\delta_n}{N}\right) \ge \frac{1}{2},
\end{equation}
where $\delta_n$ is as in the definition of the IMP game.
\end{lemma}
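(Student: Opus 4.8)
The plan is to prove \eqref{Eq:maxmin_mixed} directly, by exhibiting a single mixed strategy $D_{\ne}^{*}$ for Player~``$\ne$'' that forces an expected limsup-dissimilarity of at least $1/2$ against \emph{every} $D_{=}$. Inequality \eqref{Eq:minmax_mixed} then comes for free from the elementary max-min inequality $\sup_{D_{\ne}}\inf_{D_{=}} \le \inf_{D_{=}}\sup_{D_{\ne}}$, valid for any real-valued payoff, so that $\inf_{D_{=}}\sup_{D_{\ne}} \ge \sup_{D_{\ne}}\inf_{D_{=}} \ge 1/2$.

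First I would record the structural simplification observed above: when both strategies are nonadaptive, Player~``$=$'' emits a fixed bit sequence $a=(a_n)$ and Player~``$\ne$'' a fixed bit sequence $b=(b_n)$, the two being chosen independently, with $\delta_n = a_n\oplus b_n$. The random variable inside the expectation is therefore $\limsup_{N\to\infty}\frac1N\sum_{n=1}^{N}(a_n\oplus b_n)$, a function of the independent pair $(a,b)$, and a mixed strategy is simply a distribution over the \emph{countable} set $\textit{NA}(\Sigma^0_1)$. The naive route---have Player~``$\ne$'' output an independent fair coin at each round, so that $\delta_n$ is itself a fair coin and the average tends to $1/2$---is unavailable precisely because $\textit{NA}(\Sigma^0_1)$ is countable and contains no uniformly random string.

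The idea that sidesteps this is a two-point mixture of a sequence and its complement. Concretely, take $D_{\ne}^{*}$ to play the all-zeros sequence with probability $1/2$ and the all-ones sequence with probability $1/2$; both belong to $\textit{NA}(\Sigma^0_1)$, being the nonadaptive images of the decidable languages $\emptyset$ and $\{0,1\}^{*}$. Fix a pursuer sequence $a$ and write $U_a=\limsup_{N}\frac1N\sum_{n=1}^{N}a_n$ and $L_a=\liminf_{N}\frac1N\sum_{n=1}^{N}a_n$. The all-zeros draw gives $\delta_n=a_n$, hence limsup $U_a$; the all-ones draw gives $\delta_n=1-a_n$, hence limsup $1-L_a$. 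Averaging over the two-point mixture yields
\[
E_{b}\left(\limsup_{N\to\infty}\frac1N\sum_{n=1}^{N}\delta_n \,\middle|\, a\right) = \tfrac12 U_a + \tfrac12(1-L_a) = \tfrac12 + \tfrac12(U_a-L_a) \ge \tfrac12,
\]
using only $U_a\ge L_a$.

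Finally I would pass from a pure pursuer to an arbitrary mixed $D_{=}$ by the tower property: since $a$ and $b$ are independent, $E_{a,b}(\limsup\cdots)=E_{a}\bigl(E_{b}(\limsup\cdots\mid a)\bigr)\ge E_a(\tfrac12)=\tfrac12$, so $D_{\ne}^{*}$ secures $\inf_{D_{=}}E(\cdots)\ge 1/2$ and \eqref{Eq:maxmin_mixed} holds. I do not expect a genuine technical obstacle: the one point needing care is the interchange of expectation with $\limsup$, but the evader's mixture is finitely supported, so $E_{b}(\limsup\mid a)$ is literally the two-term convex combination displayed above and no Fatou-type inequality intervenes. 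The only real content is conceptual---recognising that pairing a sequence with its complement converts the unknown limsup/liminf gap $U_a-L_a\ge 0$ into a guaranteed payoff of $1/2$, thereby circumventing the impossibility of sampling a truly random R.E.\ string.
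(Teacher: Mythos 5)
Your proposal is correct and is essentially the paper's own proof: the same two-point evader mixture of the all-zeros and all-ones sequences, with the same passage to mixed $D_{=}$ by linearity and the same observation that a single uniform evader strategy yields both inequalities. The only cosmetic difference is that you compute the two limsups explicitly as $U_a$ and $1-L_a$, whereas the paper packages the identical estimate as the triangle inequality $\text{DisSim}(L_{=},L_0)+\text{DisSim}(L_{=},L_1)\ge\text{DisSim}(L_0,L_1)=1$.
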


In other words, Player~``$\ne$'' can always at the very least break even, from a
$\limsup$ perspective.

\begin{proof}
Let $D_{\ne}$ be a mixture of the following two strategies: all zeros ($L_0$),
with probability $1/2$; all ones ($L_1$), with probability $1/2$.
By the triangle inequality, we have that for any language $L_{=}$,
\[
E\left(\limsup_{N\to\infty} \sum_{n=1}^{N} \frac{\delta_n}{N}\right)=
\frac{\text{DisSim}(L_{=},L_0)+\text{DisSim}(L_{=},L_1)}{2}\ge
\frac{\text{DisSim}(L_0,L_1)}{2}=\frac{1}{2},
\]
and because this is true for each $L_{=}$ in $D_{=}$, it is also true in
expectation over all $D_{=}$. The fact that $D_{\ne}$ is independent of
$D_{=}$ in the construction means that this bound is applicable for
both \eqref{Eq:maxmin_mixed} and \eqref{Eq:minmax_mixed}.
\end{proof}

Just as interesting (and with tighter results) is the investigation of
$\liminf$. We show
\begin{lemma}\label{L:na_inf}
\begin{equation}\label{Eq:na_inf}
\inf_{L_{=}\in\textit{NA}(\Sigma^0_1)} \sup_{L_{\ne}\in\textit{NA}(\Sigma^0_1)} \liminf_{N\to\infty} \sum_{n=1}^{N} \frac{\delta_n}{N}=
\sup_{L_{\ne}\in\textit{NA}(\Sigma^0_1)} \inf_{L_{=}\in\textit{NA}(\Sigma^0_1)} \liminf_{N\to\infty} \sum_{n=1}^{N} \frac{\delta_n}{N}=0,
\end{equation}
where $\delta_n$ is as in the definition of the IMP game.
\end{lemma}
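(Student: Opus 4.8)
The plan is to reduce the statement to a concrete combinatorial problem about recursively enumerable sets of \emph{positions} and then exhibit a single nonadaptive $\Sigma^0_1$ strategy for Player~``$=$'' that defeats the $\liminf$ objective of Player~``$\ne$'' no matter which language the latter selects. First I would unpack the nonadaptive setting: since both players are nonadaptive, $O_{=}(n)$ and $O_{\ne}(n)$ depend only on the round number $n$, so each player's play is a fixed bit-string whose set of ``$1$''-positions is an R.E.\ subset of $\mathbb{N}$ (this is essentially the content of \eqref{Eq:na_sigma}). Writing $A$ and $B$ for the position sets chosen by Player~``$=$'' and Player~``$\ne$'', we have $\delta_n=1$ exactly when $n\in A\triangle B$, so $\frac1N\sum_{n=1}^N\delta_n$ is precisely the density of the symmetric difference $A\triangle B$ inside $\{1,\dots,N\}$, and the quantity to be analysed is its lower density.

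Next I would observe that both expressions in \eqref{Eq:na_inf} are manifestly nonnegative (each $\delta_n\ge 0$), and that the maxmin is always at most the minmax by weak duality. Hence it suffices to prove the single bound $\inf_{L_{=}}\sup_{L_{\ne}}\liminf_{N}\frac1N\sum_{n=1}^N\delta_n\le 0$; the chain $0\le\text{maxmin}\le\text{minmax}\le 0$ then forces all four quantities to equal $0$. So the whole problem collapses to producing one good strategy $A$ for Player~``$=$''.

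The heart of the argument is the construction of this $A$. Fix an enumeration $\mathcal{T}_0,\mathcal{T}_1,\dots$ of all Turing machines, with $B_j$ the R.E.\ position set accepted by $\mathcal{T}_j$, so that every possible choice of Player~``$\ne$'' equals some $B_{j^*}$. Partition $\mathbb{N}$ into consecutive blocks $I_1,I_2,\dots$ whose lengths grow so fast that $\sum_{k'<k}|I_{k'}|=o(|I_k|)$ (for instance $|I_k|=k!$, or a doubly-exponential schedule), and fix a schedule $k\mapsto j(k)$ — e.g.\ a triangular walk as in \eqref{Eq:triangle} — that assigns every index $j$ to infinitely many blocks. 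Define $A$ by declaring, for $n\in I_k$, that $n\in A$ iff $n\in B_{j(k)}$. This $A$ is R.E.: to enumerate it one locates the block $I_k$ containing $n$, computes $j(k)$, and semi-decides $n\in B_{j(k)}$ by simulating $\mathcal{T}_{j(k)}$. Thus $A$ is the position set of a legitimate nonadaptive $\Sigma^0_1$ strategy, and, crucially, it is fixed in advance, independently of Player~``$\ne$'''s choice.

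Finally I would verify that this $A$ works. Given any opponent $B=B_{j^*}$, consider the infinite collection of blocks $I_k$ with $j(k)=j^*$. On each such block $A$ and $B$ agree exactly, so $A\triangle B$ contains no point of $I_k$; therefore $|(A\triangle B)\cap\{1,\dots,\max I_k\}|$ is at most $\sum_{k'<k}|I_{k'}|$, the total size of the earlier blocks. Dividing by $N=\max I_k$ gives a density bounded by $\sum_{k'<k}|I_{k'}|/|I_k|\to 0$ by the growth condition, whence $\liminf_{N}\frac1N\sum_{n=1}^N\delta_n=0$ for this $B$. As this holds for every $B$, our single $A$ achieves $\sup_{L_{\ne}}\liminf_{N}\frac1N\sum_{n=1}^N\delta_n=0$, establishing the required bound and hence the lemma. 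The step I expect to be most delicate is calibrating the block growth: merely geometric blocks leave a constant fraction of ``old'' disagreements and yield a positive lower density, so each block must dominate all of its predecessors combined. The conceptual crux is that because the objective is a $\liminf$ (one favourable subsequence suffices), a pre-committed nonadaptive player can afford to impersonate every possible opponent in turn and still reach exact agreement on the rare but recurrent super-long blocks devoted to the true opponent.
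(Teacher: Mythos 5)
Your proposal is correct and follows essentially the same route as the paper's proof: a single pure nonadaptive strategy built from factorially-growing blocks, each block impersonating the language assigned to it by a triangular walk over all R.E.\ indices, with the $\liminf$ driven to $0$ because each block dominates the union of its predecessors. The only (welcome) addition is that you make explicit the weak-duality step $0\le\text{maxmin}\le\text{minmax}\le 0$, which the paper leaves implicit.
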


\begin{proof}
Let $\textit{triangle}(x)$ be as in \eqref{Eq:triangle}, and
let $\textit{caf}(x)$ be the maximum integer, $y$, such that $y!\le x$.

The language $L_{=}$ will be defined by
\[
w_i \in L_{=} \Leftrightarrow w_i \in L_{\textit{triangle}(\textit{caf}(i))},
\]
where $L_0,L_1,L_2,\ldots$
is an enumeration over all R.E.\ languages.

To prove that for any $j$, if $L_{\ne}=L_j$ the claim holds, let us first join
the rounds into ``super-rounds'', this being the partition of the rounds set
according to the value of $y=\textit{caf}(i)$. At each super-round, $L_{=}$
equals a specific $L_x$,
and by the end of the super-round, a total of $(y-1)/y$ of the
total rounds will have been rounds in which $L_{=}$ equals this
$L_x$. Hence, the Hamming distance between the two (the number of differences)
at this time is at most $1/y$ of the string lengths. Because
each choice of $x$ repeats an infinite number of times, the $\liminf$ of this
proportion is $0$.
\end{proof}

With this lemma, we can now prove Theorem~\ref{T:approx}.

\begin{proof}
The theorem is a direct corollary of the proof of Lemma~\ref{L:na_inf}, because
the complement of the language $L_{=}$ that was constructed in the proof to
attain the infimum can be used as $\bar{L}$.
\end{proof}

Combining Lemma~\ref{L:na_sup} and \ref{L:na_inf} with the definition of the
payoff function in \eqref{Eq:ne}, we get, in total:
\begin{cor}
\[
1/4 \le \text{maxmin}\left(\textit{NA}(\Sigma^0_1)\right)\le 1/2
\]
and
\[
1/4 \le \text{minmax}\left(\textit{NA}(\Sigma^0_1)\right)\le 1/2.
\]
\end{cor}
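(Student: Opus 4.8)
The plan is to read off all four inequalities from the two preceding lemmas together with the structure of the payoff \eqref{Eq:ne}. The key observation is that the payoff decomposes as $S=\frac12 A+\frac12 B$, where $A=\liminf_{N\to\infty}\sum_{n=1}^N \delta_n/N$ and $B=\limsup_{N\to\infty}\sum_{n=1}^N \delta_n/N$, and that $0\le A\le B\le 1$. From this decomposition I get two pointwise bounds that I will push through the relevant $\inf$ and $\sup$ operators: since $A\ge 0$ we have $S\ge\frac12 B$, and since $B\le 1$ we have $S\le\frac12 A+\frac12$.

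First I would establish the lower bounds. Because $S\ge\frac12 B$ holds identically, taking expectations preserves it, so for any pair of mixed strategies $E(S)\ge\frac12 E(B)$. Threading this through the operators by monotonicity of $\inf$ and $\sup$ gives $\text{maxmin}(\textit{NA}(\Sigma^0_1))=\sup_{D_{\ne}}\inf_{D_{=}}E(S)\ge\frac12\sup_{D_{\ne}}\inf_{D_{=}}E(B)$ and, likewise, $\text{minmax}(\textit{NA}(\Sigma^0_1))\ge\frac12\inf_{D_{=}}\sup_{D_{\ne}}E(B)$. Lemma~\ref{L:na_sup}, through \eqref{Eq:maxmin_mixed} and \eqref{Eq:minmax_mixed}, bounds each of these two $E(B)$ quantities below by $1/2$, which yields both $\ge 1/4$ claims.

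For the upper bounds I would use Lemma~\ref{L:na_inf}, which produces (indeed constructs explicitly) a pure strategy $L_{=}\in\textit{NA}(\Sigma^0_1)$ for which $\sup_{L_{\ne}}A=0$, i.e.\ $A=0$ against every opponent $L_{\ne}$. For this $L_{=}$ we then have $S=\frac12 B\le\frac12$ for every pure $L_{\ne}$, and since against a fixed pure strategy mixing cannot help the opponent beyond its best pure response, $S(L_{=},D_{\ne})=E_{L_{\ne}\sim D_{\ne}}(S(L_{=},L_{\ne}))\le\sup_{L_{\ne}}S(L_{=},L_{\ne})\le\frac12$, we also obtain $\sup_{D_{\ne}}S(L_{=},D_{\ne})\le\frac12$. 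As a pure strategy is a special mixed strategy, this single $L_{=}$ witnesses $\text{minmax}(\textit{NA}(\Sigma^0_1))\le 1/2$. The matching bound for the maxmin then follows from the weak-duality inequality $\text{maxmin}\le\text{minmax}$, valid for any game.

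The argument is largely bookkeeping once the two lemmas are in hand; the step I would treat as the main obstacle is the upper-bound direction, where one must not conflate the two halves of \eqref{Eq:ne}: it is the $\liminf$ half, controlled by Lemma~\ref{L:na_inf}, that can be driven to zero, while the $\limsup$ half only contributes the harmless term $\frac12\cdot 1$. Care is also needed to verify that the explicit minimizer of Lemma~\ref{L:na_inf} forces $A=0$ against all opponents simultaneously, so that the $\sup_{L_{\ne}}$ may be taken only after bounding $S$, and to confirm that the passage from a pure to a mixed opponent, and from $\text{minmax}$ to $\text{maxmin}$, is legitimate.
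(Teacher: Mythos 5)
Your proposal is correct and is essentially the paper's argument: the paper simply states that the corollary follows by combining Lemma~\ref{L:na_sup} and Lemma~\ref{L:na_inf} with the payoff definition \eqref{Eq:ne}, and your write-up is a careful expansion of exactly that combination (lower bounds from the $\limsup$ lemma via $S\ge\tfrac12 B$, upper bound on the minmax from the explicit $L_{=}$ of Lemma~\ref{L:na_inf} via $S\le\tfrac12 A+\tfrac12$, and weak duality for the maxmin). No gaps.
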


Though we have the exact value of neither maxmin nor minmax in this case,
we do see that the case is somewhat unusual in that neither player has a
decisive advantage.

\section{Conclusions and further research}

We have introduced the IMP game as an arena within which to test the ability
of algorithms to learn and be learnt, and specifically investigated three
scenarios:
\begin{description}
\item[Adversarial learning,] where both algorithms are simultaneously trying to
learn each other by observations.
\item[Non-adversarial (conventional) learning,] where an algorithm is
trying to learn a language by examples.
\item[Approximation,] where languages (or language distributions) try to mimic
each other without having any visibility to their opponent's actions.
\end{description}

In the case of adversarial learning, we have shown that $\Sigma^0_i$ can learn
$\Sigma^0_i$
but not $\Pi^0_i$.

In conventional learning, however, we have shown that $\Sigma_i$ can
learn $\Sigma^0_i$, $\Pi^0_i$ and beyond into the
$(i+1)^{\rm th}$
level of the
arithmetic hierarchy, but this learnability is yet to be upper-bounded.
Our conjecture
is that the class of learnable languages is strictly a subset of
$\Delta^0_2$. If so, then this defines a new class of languages between the
first and second levels of the arithmetic hierarchy, and, indeed, between
any consecutive levels of it.

Regarding approximability, we have shown that (unlike in the previous
results) no side has the absolute upper hand in the game, with the game value
for Player~``$\ne$'', if it exists, lying somewhere between $1/4$ and $1/2$.
We do not know, however, whether the game is completely unbiased or not.

An investigation of adversarial learning in the context of recursive
languages was given as a demonstration of the fact that in IMP it may be the
case that no Nash equilibrium exists at all, and pure-strategy learning was
given as a concrete example of a task where probabilistic Turing machines have
a provable advantage over deterministic ones.

\bibliographystyle{plain}

\end{document}